\documentclass[11pt,journal,onecolumn]{IEEEtran}
\usepackage[T1]{fontenc}
\usepackage{lmodern,microtype,amsmath,amssymb,amsthm,mathtools,booktabs,array}
\usepackage{pgfplots}
\pgfplotsset{compat=1.18}
\usepackage{xcolor}

\usepackage[hidelinks]{hyperref}
\newtheorem{theorem}{Theorem}
\newtheorem{lemma}{Lemma}
\newtheorem{proposition}{Proposition}
\newcommand{\E}{\mathbb E}

\allowdisplaybreaks[2]
\hypersetup{pdftitle={A Sharp Fisher-Information Inequality for Boolean Pairs: Proof of the Amari--Kobayashi Conjecture},pdfsubject={A sharp Fisher-information inequality for arbitrary Boolean pairs}}
\title{A Sharp Fisher-Information Inequality for Boolean Pairs: Proof of the Amari--Kobayashi Conjecture}
\author{Jun Chen, Shun Watanabe, and Lei Yu\thanks{Authors are listed alphabetically.}
}
\begin{document}
\maketitle
\begin{abstract}
Consider two terminals that observe correlated uniform binary vectors and separately compress their observations to one bit. We prove that, for every blocklength and every correlation parameter in $(-1,1)$, the Fisher information retained by the resulting pair of bits is maximized by matching parity functions of an appropriate degree, thereby resolving a conjecture of Amari and Kobayashi. The proof combines a Fourier-based bound on the derivative of the normalized covariance with a scalar inequality for signed polynomials subject to a global range constraint.
\end{abstract}
\begin{IEEEkeywords}
Boolean functions, correlation estimation, distributed statistical inference, Fisher information, Fourier analysis, parity functions.
\end{IEEEkeywords}
\section{Introduction}\label{sec:intro}
How much information about a correlation parameter can two terminals preserve when each terminal is allowed to transmit only one bit? This question provides a concrete instance of statistical inference under distributed compression. The observations at either terminal may have a distribution that does not depend on the unknown parameter, even though their joint distribution does. In that setting, useful information is carried entirely by dependence, and the choice of the two encoders must preserve sensitivity to changes in that dependence.

The broader theory of statistical inference under multiterminal
data compression includes hypothesis testing and parameter
estimation under communication constraints; see Han and
Amari~\cite{hanamari}. Amari~\cite{amari} studied correlation
estimation for binary sources and compared compression schemes
through the Fisher information of their outputs. This formulation
leads to a concrete finite-dimensional question: for two correlated
uniform binary vectors, which pair of Boolean functions maximizes
the Fisher information about their correlation? The conjecture
of Amari and Kobayashi~\cite{amari2023,amari2024,kobayashi2023}
asserts that matching parity functions are optimal, with the
number of participating coordinates chosen according to the
correlation. 

This problem is closely related to the most informative Boolean
function problem. Kumar and Courtade~\cite{courtade} conjectured
that dictator functions maximize the mutual information between
a Boolean function of a uniform binary vector and the entire
vector observed through independent binary noise. This conjecture
was recently resolved in
\cite{chen2026proof,kytran2026,mahdavifar2026,openai2026contraction,kramer2026}.
For the two-function version, Pichler, Piantanida, and
Matz~\cite{pichler} had previously established the optimality of
dictator functions. Noise stability and related mutual-information
objectives have also been studied through Fourier
methods~\cite{li,odonnell}. Despite these connections, the
Fisher-information objective leads to a different optimization
problem: it measures local sensitivity to changes in the
correlation parameter. Consequently, a parity of several
coordinates can retain more Fisher information than a dictator
even though its output correlation has smaller magnitude.



The closest preceding result is due to Chen, Watanabe, and Yu~\cite{chen}. They established sharp Fisher-information comparisons for balanced pairs and for identical functions in the nonnegative-correlation regime. Their analysis also treats relative-entropy comparisons and Bayesian one-bit hypothesis testing. The present paper addresses the remaining Fisher-information question for arbitrary pairs, including unequal biases and negative correlations. We prove that matching parities attain the optimum for every blocklength and every correlation in $(-1,1)$. 

The main difficulty is the interaction between a polynomial derivative and a small output probability. Fourier analysis gives a bound on the derivative of normalized covariance that is sharp for a monomial. For balanced outputs, this estimate immediately yields the optimal Fisher information. For biased outputs, however, a joint cell can approach zero before normalized covariance reaches one. The reciprocal cell probability then introduces an additional curvature factor, which cannot be controlled by the balanced estimate alone.

Our proof uses the fact that the covariance polynomial must
remain feasible at every correlation, not only at the point
where its derivative is evaluated. We first factor the
four-cell curvature and bound it using only the feasible
boundary in the direction of the current covariance.
Together with the Fourier coefficient bound, this reduces
the theorem to a scalar inequality for signed polynomials
subject to a global upper bound. The central task is to
show that the derivative cannot be too large when the
polynomial is close to this bound. We establish this
control by combining spectral estimates with Taylor
estimates along suitable paths in the correlation
parameter. Near the upper boundary, these estimates force
the squared derivative to decrease with the distance to
the boundary, compensating for the singular curvature
factor. 


The rest of this paper is organized as follows.
Section~\ref{sec:main} states the main theorem and the two lemmas
used in its proof. Section~\ref{sec:curvature} proves the curvature
bound in  Lemma~\ref{lem:curvature}. Section~\ref{sec:scalar} proves the scalar inequality
in Lemma~\ref{lem:envelope}, separating negative slopes from positive
slopes and treating the latter according to the
correlation, the total logarithmic scale, and the
ceiling parameter.  Section~\ref{sec:conclusion}
concludes the paper. Throughout, all logarithms are
natural.

\section{Main Result and Supporting Lemmas}\label{sec:main}

Let $n$ be a positive integer, and let $X^n,Y^n$ be uniform random
vectors on $\{-1,1\}^n$ such that the coordinate pairs
$(X_i,Y_i)$ are independent and satisfy
$\E[X_iY_i]=r\in(-1,1)$ for every $i$. We call $Y^n$ an
$r$-correlated copy of $X^n$. For Boolean functions
$f,g:\{-1,1\}^n\to\{-1,1\}$, let $P^{f,g}_r$ denote the joint
law of $(f(X^n),g(Y^n))$. With $f$ and $g$ held fixed as $r$ varies,
define the Fisher information about $r$ retained by this
output pair as
\begin{align}\label{eq:Fisher}
I_{f,g}(r)
=
\sum_{\substack{x,y\in\{-1,1\}\\ P^{f,g}_r(x,y)>0}}
\frac{\bigl[\partial_r P^{f,g}_r(x,y)\bigr]^2}
     {P^{f,g}_r(x,y)}.
\end{align}
Restricting the sum to positive-probability outcomes accommodates constant functions. For $r\in(-1,1)$, every input pair has positive probability, so the output support is independent of $r$; if both functions are nonconstant, all four output probabilities are positive.



\subsection{The parity benchmark}\label{sec:Fisherbenchmark}

Let $S\subseteq\{1,\ldots,n\}$ be nonempty, with $|S|=k$, and
consider the matching parity functions
\begin{align}
f(x^n)=g(x^n)=\prod_{i\in S}x_i.
\end{align}
Both outputs are balanced, and their correlation is $r^k$.
Consequently, each equal-sign output pair has probability
$(1+r^k)/4$, whereas each opposite-sign output pair has
probability $(1-r^k)/4$:
\begin{align}
&P^{f,g}_r(-1,-1)
=P^{f,g}_r(1,1)=\frac{1+r^k}{4},
\label{eq:fourcell_1}\\
&P^{f,g}_r(-1,1)
=P^{f,g}_r(1,-1)=\frac{1-r^k}{4}.
\label{eq:fourcell_2}
\end{align}
Substituting \eqref{eq:fourcell_1} and \eqref{eq:fourcell_2}
into \eqref{eq:Fisher} gives
\begin{align}\label{eq:parity_information}
I_{f,g}(r)=\frac{k^2r^{2k-2}}{1-r^{2k}}.
\end{align}
At $r=0$, the Fisher information is $1$ for degree-one
parities and $0$ for all higher-degree parities, in agreement
with the continuous extension of
\eqref{eq:parity_information}.

The Amari--Kobayashi conjecture asserts that matching parity
functions of a suitable degree maximize the retained Fisher
information among all Boolean pairs. Our main result
establishes this conjecture.

\begin{theorem}\label{main}
For every positive integer $n$, every pair of Boolean functions
$f,g:\{-1,1\}^n\to\{-1,1\}$, and every $r\in(-1,1)$,
\begin{align}\label{target}
I_{f,g}(r)\le M_n(r),
\end{align}
where
\begin{align}\label{eq:parity_envelope}
M_n(r):=
\begin{cases}
\displaystyle
\max_{1\le k\le n}
\frac{k^2r^{2k-2}}{1-r^{2k}},
& 0<|r|<1,\\[6pt]
1, & r=0.
\end{cases}
\end{align}
For each $r$, equality is attained by matching parity
functions of a degree maximizing the benchmark.
\end{theorem}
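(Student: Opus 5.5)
The plan is to reduce the four-cell Fisher information to a statement about one scalar polynomial in $r$, and then prove a scalar inequality for that polynomial. Write $a=\E f(X^n)$, $b=\E g(Y^n)$ and $C(r)=\E[f(X^n)g(Y^n)]-ab$. The marginals do not depend on $r$, and the four cell probabilities are $\bigl(1\pm a\pm b+ab+C(r)\cdot(\pm1)(\pm1)\bigr)/4$. Only $C$ varies with $r$, so every cell derivative equals $\pm C'(r)/4$. Hence
\begin{align*}
I_{f,g}(r)=\frac{C'(r)^2}{16}\sum_{x,y}\frac{1}{P^{f,g}_r(x,y)},
\end{align*}
where the sum runs over the positive cells. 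Constant functions give $C\equiv 0$, and the bound is trivial in that case. By Fourier expansion, $C(r)=\sum_{S\neq\emptyset}\hat f(S)\hat g(S)r^{|S|}=\sum_{k\ge1}c_k r^k$. Its coefficients are signed, and Cauchy--Schwarz together with Parseval gives $\sum_k|c_k|\le\sqrt{(1-a^2)(1-b^2)}=:\sigma$.

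The first step is the balanced case $a=b=0$. Then $\sum 1/P=16/(1-C^2)$, so $I=C'^2/(1-C^2)$. I would normalize $\rho=C/\sigma$ and prove $\rho'(r)^2\le M_n(r)\,(1-\rho(r)^2)$ for every polynomial $\rho=\sum_{k\le n}c_kr^k$ with $\sum|c_k|\le1$. The map $\rho\mapsto \rho'^2/(1-\rho^2)$ is not convex in the coefficients, so I would not argue by extreme points directly. Instead I would use the paper's Fourier derivative bound, sharp for monomials: it controls $|\rho'|$ in terms of the weighted mass $\sum k|c_k||r|^{k-1}$. Combining this with the constraint $|\rho|\le\sum|c_k||r|^k$ should give the monomial comparison.

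The second step handles general biases. The curvature term is $\sum 1/P$, and I would bound it by a "curvature lemma". The key point is that the feasible range of $C$ is an interval $[-\sigma_-,\sigma_+]$ determined by $a,b$, namely where all four cells stay nonnegative. The factor $\sum 1/P$ is essentially governed by the distance from $C$ to the boundary lying in the direction of $\operatorname{sign}(C)$. Concretely, I would aim for a bound of the form
\begin{align*}
\frac{1}{16}\sum\frac{1}{P}\le \frac{K}{(\sigma^2-C^2)}\cdot\phi(C/\text{ceiling}),
\end{align*}
where $\phi$ is an explicit factor that blows up only near the nearer boundary.

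This reduces the theorem to a scalar "envelope" inequality. For a signed polynomial $\rho$ with $\sum|c_k|\le 1$, which also satisfies the global constraint $\rho(t)\le u$ for all $t\in(-1,1)$ (with $u\le1$ a ceiling parameter coming from the bias), one needs $\rho'(r)^2\times(\text{curvature})\le M_n(r)$.

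The main obstacle is the case where $\rho(r)$ approaches the ceiling $u<1$, so that a cell probability tends to zero while the balanced estimate gives no decay of $\rho'$. Here I would exploit global feasibility through Taylor expansion. If $\rho'(r)$ were large while $u-\rho(r)=\delta$ is small, then moving to $r\pm h$ along a suitable path would overshoot the ceiling. This is because the second derivative is controlled spectrally by $\sum k(k-1)|c_k|$, so the overshoot is avoided only if $\rho'(r)^2\lesssim\delta\cdot\sup|\rho''|$. That estimate cancels the $1/\delta$ singularity.

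I would split the argument by the sign of $\rho'$. Negative slopes should be easy by monotonicity toward the boundary. For positive slopes I would separate cases by the size of $|r|$, by the total logarithmic scale $\log(1/\delta)$ relative to the degree, and by the ceiling $u$. In each regime I would check numerically-explicit constants against $M_n(r)\ge\max(1,\,4r^2/(1-r^4),\dots)$. The tightest regime will be $|r|$ near $1$, where $M_n$ is largest, and there I expect the bookkeeping to be delicate.

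Equality for the parity of the optimal degree is exactly \eqref{eq:parity_information}.
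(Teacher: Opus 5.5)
You reproduce the paper's architecture. Fisher information is $h'(r)^2$ times a four-cell curvature factor, and that factor is bounded using only the ceiling in the direction of the current covariance. The theorem then reduces to a scalar inequality for signed polynomials with $\sum_k|a_k|\le1$ and a one-sided range constraint, attacked by Taylor estimates split by the sign of the slope. Your balanced step is essentially the paper's bound $s'(r)^2\le M_n(r)(1-s(r)^2)$.

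The gap is that this scalar inequality is the whole difficulty of the theorem, and the proposal does not prove it. Even the reduction is not yet in usable form. A curvature bound of the shape $K(\sigma^2-C^2)^{-1}\phi(C/\text{ceiling})$ cannot be sharp. At ceiling $1$, which contains the parities where equality holds, the factor must reduce exactly to $1/(1-d^2)$. For a ceiling $\lambda<1$ it must blow up as $d\uparrow\lambda$. A function of $d/\lambda$ alone cannot do both. The paper factors the four-cell sum exactly and then discards the far endpoint $\eta$ via $(\eta+d)(1+\eta d)-\eta(1+d)^2=d(1-\eta)^2\ge0$. This gives $I_{f,g}(r)\le h'(r)^2\Phi_\lambda(d)$ with $\Phi_\lambda(d)=\frac{1}{1-d^2}\cdot\frac{\lambda(1-d)^2}{(\lambda-d)(1-\lambda d)}$, and the scalar inequality must be proved against exactly this factor.

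Negative slopes are not automatic either. Monotonicity of the negative part gives $F''\le A''(r)$ on $[0,r]$. You still need degree-wise bounds such as $k(k-1)r^{k-2}\le 4M_n(r)(1-r^k)$, obtained from geometric means of adjacent parity terms, to make this comparable to $M_n(r)$.

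The concrete failing step is your mechanism for cancelling $1/\delta$, where $\delta=\lambda-d$: a Taylor step with $\sup|\rho''|\le\sum_k k(k-1)|c_k|$. Put $t=-\log r$. Then $M_n(r)=r^{-2}t^{-2}\max_{k\le n}\psi(kt)$ with $\psi(x)=x^2/(\exp(2x)-1)<1/6$. So when $nt$ is large, $M_n(r)\lesssim t^{-2}$ uniformly in $n$, while $\sum_k k(k-1)|c_k|$ can be of order $n^2$. The resulting bound $h'(r)^2\Phi_\lambda(d)\lesssim n^2Q$, with $Q=(\lambda-d)\Phi_\lambda(d)$, misses by a factor of order $(nt)^2$. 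Hence the split that matters is by $n\log(1/r)$, not by $\log(1/\delta)$. For $r^n\ge1/2$ a single Taylor step with a global curvature bound does work, after substituting $z=v^n$.

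What is missing in the other regime is a one-sided, degree-free curvature estimate. Write $h=B-A$ with nonnegative coefficients and masses $e=A(1)$, $o=B(1)$, $e+o=1$. Only $F''\le A''$ is needed for $F=\lambda-h$. Along the path $v=r^{1-\theta}$ one has $(1-\theta)^2\,\partial_\theta^2 A(r^{1-\theta})\le 4\exp(-2)\,e$. This yields $\delta\ge U^2/(13e/12+4U/3)$ for $U=rt\,h'(r)$. It must be combined with $e\ge(1-\lambda)/2$ (from $h(1)\le\lambda$), $U\le o-d$, $U^2\le H(o^2-d^2)$ with $H=r^2t^2M_n(r)$, and $H\ge(\log 3)^2/8$ when $r\ge1/2$ and $nt>\log2$.

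Even this does not close the central ceilings $1/5\le\lambda\le4/5$ just above the spectral threshold. At $\lambda=1/2$, $d=0.4$, these constraints alone still permit $h'(r)^2\Phi_\lambda(d)>M_n(r)$. The paper additionally exploits $b''\ge(b')^2/b$ for the positive part along the path. A Riccati comparison turns this into the extra term $-U^2/(o-4U/5)$, followed by a concavity-in-$d$ and endpoint analysis. Without these ingredients, ``checking explicit constants in each regime'' has no inequality to check.
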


\begin{figure}[t]
\centering
\includegraphics[width=\linewidth]{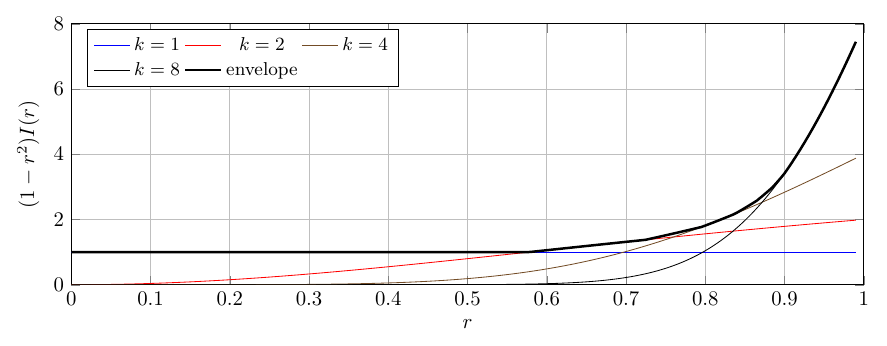}
\caption{Scaled Fisher information for matching parity
functions with $n=8$. Thin curves show selected degrees;
the thick curve shows the maximum over all eight degrees,
including those not individually displayed. The factor
$1-r^2$ removes the divergence as $r\uparrow1$.}
\label{fig:parity}
\end{figure}

Figure~\ref{fig:parity} illustrates how the optimal parity
degree changes with the correlation. For matching parities
of degree $k$,
\begin{align}\label{eq:scaled_parity_limit}
(1-r^2)I_{f,g}(r)
=
\frac{k^2r^{2k-2}}{\sum_{j=0}^{k-1}r^{2j}}
\longrightarrow k
\qquad\text{as }r\uparrow1.
\end{align}
Thus, for fixed $n$, degree $n$ is optimal sufficiently close
to $r=1$, whereas degree one is optimal at $r=0$ and in a
neighborhood of zero. Since the benchmark is even in $r$,
the same conclusions hold in terms of $|r|$. In particular,
when $n\ge2$, a comparison with dictators alone cannot yield
a sharp bound throughout $(-1,1)$.

\subsection{Normalization and Fourier estimates}
\label{sec:normalization}

This subsection prepares the reduction to the two lemmas
stated in Section \ref{sec:lemmas}. We express the output dependence
through a normalized covariance polynomial and establish
the range and coefficient constraints needed for the scalar
lemma. We also derive a spectral estimate that settles
several cases directly. Finally, we choose the sign of the
polynomial so that the two signs of output covariance can
be treated within a common framework.

If either function is constant, the output law is independent
of the correlation parameter, so its Fisher information is
zero. We therefore assume that both functions are nonconstant.
Replacing $g(y^n)$ by $g(-y^n)$ reflects the correlation
parameter and preserves Fisher information at the reflected
point. Since $M_n$ is even, it suffices to consider nonnegative
correlations.

Output relabeling and exchange of the two terminals allow
us to assume
\begin{align}\label{eq:pq}
0<p:=\Pr(f(X^n)=-1)
\le q:=\Pr(g(Y^n)=-1)\le\frac12.
\end{align}
These operations preserve Fisher information because they
only permute the terms in its defining sum. Specifically,
each output can first be complemented so that its probability
of being $-1$ is at most $1/2$, after which the terminals
can be exchanged if necessary.

Throughout this subsection, $v$ denotes a variable correlation
parameter, while $r$ denotes a correlation at which a
polynomial or its derivative is evaluated. The functions
$f$ and $g$ remain fixed as the correlation parameter varies.

\paragraph{The two endpoints}
We first express the dependence between the two outputs
by a single normalized covariance and determine the bounds
on this quantity imposed by the output marginals. 
These bounds supply the global range constraint used
in the scalar lemma and the endpoint parameters used
to express and bound the four-cell curvature.


Define
\begin{align}
\omega_p=\frac{p}{1-p},\qquad
\omega_q=\frac{q}{1-q},\qquad
\beta=\sqrt{p(1-p)q(1-q)},
\end{align}
and put
\begin{align}
a=\sqrt{\omega_p\omega_q},\qquad
b=\sqrt{\omega_p/\omega_q}.
\end{align}
The assumption \eqref{eq:pq} implies
$0<\omega_p\le\omega_q\le1$. Hence
$a/b=\omega_q\le1$ and $b^2=\omega_p/\omega_q\le1$, giving
\begin{align}
0<a\le b\le1.
\end{align}

For $v\in[0,1]$, define
\begin{align}\label{eq:normalized_covariance}
z(v)=\Pr_v(f(X^n)=-1,g(Y^n)=-1),\qquad
s(v)=\frac{z(v)-pq}{\beta}.
\end{align}
We refer to $s(v)$ as the normalized covariance; as will be seen, it is
the correlation coefficient of $f(X^n)$ and $g(Y^n)$.
At $v=1$, the joint input law is understood as $Y^n=X^n$.
We include this endpoint to express the global range
constraint on the covariance polynomial. Fisher information
is evaluated only for correlation parameters in $(-1,1)$.

The four output probabilities are
\begin{align}
&P^{f,g}_v(-1,-1)=z(v),\label{eq:cells_mm}\\
&P^{f,g}_v(-1,1)=p-z(v),\label{eq:cells_mp}\\
&P^{f,g}_v(1,-1)=q-z(v),\label{eq:cells_pm}\\
&P^{f,g}_v(1,1)=1-p-q+z(v).\label{eq:cells_pp}
\end{align}
Under \eqref{eq:pq}, their nonnegativity is equivalent to
$0\le z(v)\le p$: the other two constraints are automatic
because $q\ge p$ and $p+q\le1$. Subtracting $pq$, dividing
by $\beta$, and using
\begin{align}
\frac{pq}{\beta}=a,\qquad
\frac{p(1-q)}{\beta}=b
\end{align}
therefore yields
\begin{align}\label{feasible}
-a\le s(v)\le b,\qquad 0\le v\le1.
\end{align}
For $0\le v<1$, every input pair has positive probability.
Since both functions are nonconstant, all four output
probabilities are then positive, so both inequalities
in \eqref{feasible} are strict.

\paragraph{Fourier expansion and the coefficient budget}
We next show that the normalized covariance is a polynomial
in the input correlation whose absolute coefficient sum
is at most one. Together with the range constraint above,
this gives the polynomial assumptions required by the
scalar lemma. The same coefficient bound will also allow
us to control the derivative of the normalized covariance.

Write $[n]=\{1,\ldots,n\}$. For each $S\subseteq[n]$, define
the Fourier character
\begin{align}
\chi_S(x^n)=\prod_{i\in S}x_i,
\end{align}
with $\chi_{\varnothing}\equiv1$. Under the uniform measure
on $\{-1,1\}^n$, these characters satisfy
\begin{align}\label{eq:character_orthogonality}
\E[\chi_S(X^n)\chi_T(X^n)]
=
\begin{cases}
1,&S=T,\\
0,&S\ne T.
\end{cases}
\end{align}
Indeed, their product is the product of the coordinates
belonging to exactly one of $S$ and $T$. If $S\ne T$, this
product contains at least one independent coordinate with
mean zero. Since there are $2^n$ characters, they form an
orthonormal basis for the real-valued functions on
$\{-1,1\}^n$.

Consequently, $f$ has the Fourier expansion
\begin{align}\label{eq:fourier_expansion}
f(x^n)=\sum_{S\subseteq[n]}\widehat f(S)\chi_S(x^n),
\qquad
\widehat f(S)=\E[f(X^n)\chi_S(X^n)],
\end{align}
and similarly for $g$. In particular,
$\widehat f(\varnothing)=\E[f(X^n)]$. Parseval's identity
and the Boolean property $f^2\equiv1$ give
\begin{align}\label{eq:parseval_boolean}
\sum_{S\subseteq[n]}\widehat f(S)^2
=\E[f(X^n)^2]=1.
\end{align}
Removing the constant coefficient therefore yields
\begin{align}\label{eq:fourier_variance}
\sum_{S\ne\varnothing}\widehat f(S)^2
=1-\E[f(X^n)]^2
=\operatorname{Var}(f(X^n)),
\end{align}
with the analogous identity for $g$.

The correlation structure of $(X^n,Y^n)$ gives a corresponding
orthogonality identity across the two inputs. Independence
of the coordinate pairs implies
\begin{align}
\E_v[\chi_S(X^n)\chi_T(Y^n)]
&=
\prod_{i\in S\cap T}\E_v[X_iY_i]
\prod_{i\in S\setminus T}\E[X_i]
\prod_{i\in T\setminus S}\E[Y_i]\notag\\
&=
\begin{cases}
v^{|S|},&S=T,\\
0,&S\ne T,
\end{cases}
\label{eq:correlated_characters}
\end{align}
where $v^0$ is understood as $1$, including at $v=0$.
The second case follows because every coordinate has
mean zero, while the first uses $\E_v[X_iY_i]=v$.
Substituting the Fourier expansions of $f$ and $g$ and
applying \eqref{eq:correlated_characters}, we obtain
\begin{align}
\E_v[f(X^n)g(Y^n)]
&=
\sum_{S,T\subseteq[n]}
\widehat f(S)\widehat g(T)
\E_v[\chi_S(X^n)\chi_T(Y^n)]\notag\\
&=
\sum_{S\subseteq[n]}
\widehat f(S)\widehat g(S)v^{|S|}.
\label{eq:fourier_crossmoment}
\end{align}
The term indexed by $S=\varnothing$ is
$\E[f(X^n)]\E[g(Y^n)]$. Thus the covariance is represented
by the nonconstant Fourier terms.

To relate this covariance to $s(v)$, recall that
$p=\Pr(f(X^n)=-1)$ and $q=\Pr(g(Y^n)=-1)$. Hence
\begin{align}
\E[f(X^n)]=1-2p,\qquad
\E[g(Y^n)]=1-2q.
\end{align}
Using \eqref{eq:cells_mm}--\eqref{eq:cells_pp}, we also have
\begin{align}
\E_v[f(X^n)g(Y^n)]
=1-2p-2q+4z(v).
\end{align}
Subtracting the product of the marginal means gives
\begin{align}
\operatorname{Cov}_v(f(X^n),g(Y^n))
&=4(z(v)-pq)\notag\\
&=
\sum_{S\ne\varnothing}
\widehat f(S)\widehat g(S)v^{|S|}.
\label{eq:covariance_fourier}
\end{align}
Dividing \eqref{eq:covariance_fourier} by $4\beta$ and
grouping subsets according to their cardinality yields
\begin{align}\label{eq:normalized_coefficients}
s(v)=\sum_{k=1}^n c_kv^k,
\qquad
c_k=\frac{1}{4\beta}
\sum_{\substack{S\subseteq[n]\\|S|=k}}
\widehat f(S)\widehat g(S).
\end{align}
In particular, $s$ is a polynomial of degree at most $n$
with no constant term. Its coefficients depend only on
$f$ and $g$, since the marginal input distributions do
not depend on the correlation parameter.

We next bound the sum of the absolute values of these
coefficients. By the triangle inequality, Cauchy--Schwarz,
and \eqref{eq:fourier_variance},
\begin{align}
\sum_{k=1}^n|c_k|
&\le
\frac{1}{4\beta}
\sum_{S\ne\varnothing}
|\widehat f(S)\widehat g(S)|\notag\\
&\le
\frac{1}{4\beta}
\left(\sum_{S\ne\varnothing}\widehat f(S)^2\right)^{1/2}
\left(\sum_{S\ne\varnothing}\widehat g(S)^2\right)^{1/2}
\notag\\
&=
\frac{\sqrt{\operatorname{Var}(f(X^n))
                 \operatorname{Var}(g(Y^n))}}{4\beta}\notag\\
&=1.
\label{eq:coefficient_budget}
\end{align}
The last equality follows from
\begin{align}
\operatorname{Var}(f(X^n))=4p(1-p),\qquad
\operatorname{Var}(g(Y^n))=4q(1-q),
\end{align}
and the definition
$\beta=\sqrt{p(1-p)q(1-q)}$.

We have therefore established
\begin{align}\label{spectralnormal}
s(v)=\sum_{k=1}^n c_kv^k,\qquad
s(0)=0,\qquad
\sum_{k=1}^n|c_k|\le1.
\end{align}
For arbitrary Boolean pairs, the products
$\widehat f(S)\widehat g(S)$ can have either sign.
Accordingly, the coefficients $c_k$ need not be
nonnegative, and the absolute-value bound in
\eqref{spectralnormal} is essential.

\paragraph{The basic spectral inequality}
The coefficient budget yields a derivative bound expressed
in terms of the parity envelope. This estimate will settle the zero-output-covariance case
below and one of the boundary cases in the proof of the
main theorem in Section \ref{sec:reduction}.

Fix an evaluation point $r\in(0,1)$ and write
\begin{align}
L=\sum_{k=1}^n|c_k|\le1.
\end{align}
By the definition of the parity envelope, for every
$1\le k\le n$,
\begin{align}
\frac{k^2r^{2k-2}}{1-r^{2k}}\le M_n(r).
\end{align}
Since $1-r^{2k}>0$, taking square roots gives
\begin{align}\label{eq:degree_derivative_bound}
kr^{k-1}\le
\sqrt{M_n(r)}\sqrt{1-r^{2k}}.
\end{align}
Thus each monomial derivative is controlled by the same
parity benchmark.

Differentiating the polynomial in \eqref{spectralnormal},
evaluating at $v=r$, and applying the triangle inequality
and \eqref{eq:degree_derivative_bound}, we obtain
\begin{align}
|s'(r)|
&=\left|\sum_{k=1}^n kc_kr^{k-1}\right|\notag\\
&\le\sum_{k=1}^n |c_k|kr^{k-1}\notag\\
&\le\sqrt{M_n(r)}
\sum_{k=1}^n|c_k|\sqrt{1-r^{2k}}.
\label{eq:spectral_derivative_step}
\end{align}
Cauchy--Schwarz, applied to the vectors with entries
$\sqrt{|c_k|}$ and
$\sqrt{|c_k|}\sqrt{1-r^{2k}}$, gives
\begin{align}
\left(\sum_{k=1}^n|c_k|\sqrt{1-r^{2k}}\right)^2
&\le
\left(\sum_{k=1}^n|c_k|\right)
\left(\sum_{k=1}^n|c_k|(1-r^{2k})\right)\notag\\
&=L^2-L\sum_{k=1}^n|c_k|r^{2k}.
\label{eq:spectral_cs_slope}
\end{align}
To relate the last term to the polynomial value at $r$,
apply the triangle inequality and Cauchy--Schwarz once more:
\begin{align}
|s(r)|^2
&\le
\left(\sum_{k=1}^n|c_k|r^k\right)^2\notag\\
&\le
\left(\sum_{k=1}^n|c_k|\right)
\left(\sum_{k=1}^n|c_k|r^{2k}\right)\notag\\
&=L\sum_{k=1}^n|c_k|r^{2k}.
\label{eq:spectral_cs_value}
\end{align}
Combining \eqref{eq:spectral_derivative_step}--%
\eqref{eq:spectral_cs_value} yields
\begin{align}\label{eq:spectral_s}
s'(r)^2
\le M_n(r)\bigl(L^2-s(r)^2\bigr)
\le M_n(r)\bigl(1-s(r)^2\bigr).
\end{align}
No assumption on the signs of the coefficients is needed.
In fact, the argument applies to every real polynomial
with no constant term, degree at most $n$, and sum of
absolute coefficients at most one.

\paragraph{Zero covariance and zero correlation}
Before introducing the sign-dependent normalization, we
dispose of the cases in which the output covariance
vanishes. At such a point, Fisher information equals the
squared derivative of the normalized covariance, so the
preceding spectral estimate applies directly. We then
treat zero input correlation using the coefficient budget.

Suppose first that $r\in(0,1)$ and $s(r)=0$.
By \eqref{eq:normalized_covariance}, this is equivalent
to $z(r)=pq$. The four output probabilities therefore
factor into their marginals:
\begin{align}
&P^{f,g}_r(-1,-1)=pq,\\
&P^{f,g}_r(-1,1)=p(1-q),\\
&P^{f,g}_r(1,-1)=(1-p)q,\\
&P^{f,g}_r(1,1)=(1-p)(1-q).
\end{align}
Thus zero covariance is equivalent to independence for
this binary output pair.

The marginal probabilities $p$ and $q$, and hence $\beta$,
do not depend on the correlation parameter. Differentiating
the identity $z(v)=pq+\beta s(v)$ with respect to $v$ and
evaluating at $v=r$ gives
\begin{align}
z'(r)=\beta s'(r).
\end{align}
Differentiating the general four-cell formulas
\eqref{eq:cells_mm}--\eqref{eq:cells_pp} therefore gives,
at $v=r$,
\begin{align}
\beta s'(r),\quad -\beta s'(r),\quad
-\beta s'(r),\quad \beta s'(r),
\end{align}
respectively. Substituting these derivatives and the
output probabilities at $r$ into \eqref{eq:Fisher} gives
\begin{align}
I_{f,g}(r)
={}&\beta^2s'(r)^2
\left(
\frac{1}{pq}+\frac{1}{p(1-q)}
+\frac{1}{(1-p)q}+\frac{1}{(1-p)(1-q)}
\right).
\label{eq:information_zero_covariance}
\end{align}
The factor multiplying $s'(r)^2$ equals one, because
\begin{align}
&\beta^2
\left(
\frac{1}{pq}+\frac{1}{p(1-q)}
+\frac{1}{(1-p)q}+\frac{1}{(1-p)(1-q)}
\right)\notag\\
&=
\beta^2
\left(\frac1p+\frac1{1-p}\right)
\left(\frac1q+\frac1{1-q}\right)\notag\\
&=
\frac{\beta^2}{p(1-p)q(1-q)}\notag\\
&=1.
\label{independentcurvature}
\end{align}
It follows from \eqref{eq:spectral_s} that
\begin{align}
I_{f,g}(r)=s'(r)^2\le M_n(r).
\end{align}

At $r=0$, the inputs are independent, so the outputs are
also independent and the same calculation gives
$I_{f,g}(0)=s'(0)^2$. The polynomial representation
\eqref{spectralnormal} implies
\begin{align}
s(0)=0,\qquad s'(0)=c_1,\qquad
|c_1|\le\sum_{k=1}^n|c_k|\le1.
\end{align}
Consequently,
\begin{align}
I_{f,g}(0)=c_1^2\le1=M_n(0).
\end{align}
This settles both zero correlation and zero output
covariance.

\paragraph{Orienting the normalized covariance}
We now put the remaining cases into the form used by the
two lemmas in Section \ref{sec:lemmas}. By choosing a fixed sign
for the normalized covariance polynomial, we make its
value at the evaluation point positive and identify the
upper endpoint relevant to that sign. This allows positive
and negative output covariance to be treated by the same
information bound and scalar lemma.

Fix $r\in(0,1)$ with $s(r)\ne0$. We retain the full
polynomial $s(v)$ and its range constraint
$-a\le s(v)\le b$ for every $v\in[0,1]$.
Choose the sign of this polynomial according to its
value at $r$, so that the resulting polynomial has a
positive value at the evaluation point:
\begin{align}\label{eq:orientation}
(h,\lambda,\eta)=
\begin{cases}
(s,b,a),&s(r)>0,\\
(-s,a,b),&s(r)<0.
\end{cases}
\end{align}
Here $h=s$ or $h=-s$ is an identity of polynomials:
the sign is chosen once and held fixed as the argument
varies and when derivatives are taken. In general,
$h(v)$ is not $|s(v)|$.

If $s(r)>0$, the original interval becomes
$-\eta=-a\le h(v)\le b=\lambda$.
If $s(r)<0$, multiplying the original interval by $-1$
gives $-\eta=-b\le h(v)\le a=\lambda$.
Thus, in either case,
\begin{align}
-\eta\le h(v)\le\lambda
\qquad(0\le v\le1),
\end{align}
where $0<\lambda,\eta\le1$. Set
\begin{align}
d=h(r)=|s(r)|.
\end{align}
Since the bounds in \eqref{feasible} are strict at $v=r$,
we have
\begin{align}
0<d<\lambda\le1.
\end{align}

Writing $h(v)=\sum_{k=1}^n a_kv^k$, the coefficients
satisfy $a_k=c_k$ when $s(r)>0$ and $a_k=-c_k$ when
$s(r)<0$. Hence the sign choice preserves the coefficient
budget and the absence of a constant term:
\begin{align}\label{lemrange}
h(0)=0,\qquad
\sum_{k=1}^n|a_k|\le1,\qquad
h(v)\le\lambda\quad(0\le v\le1).
\end{align}
It also preserves the squared value and squared derivative,
so \eqref{eq:spectral_s} gives
\begin{align}\label{signedCS}
h'(r)^2\le M_n(r)\bigl(1-h(r)^2\bigr).
\end{align}
The two covariance signs have therefore been reduced to
the same setting: a signed polynomial with coefficient
budget at most one, a positive value $d$ at the evaluation
point, and a global upper bound $\lambda>d$.

\subsection{The information bound and the scalar envelope}\label{sec:lemmas}

We now state the two lemmas used to prove the main theorem.
The first bounds the Fisher information of a Boolean pair
in terms of the derivative and value of its normalized
covariance polynomial. The second shows that, under the
coefficient and range constraints established in
Section~\ref{sec:normalization}, this upper bound never
exceeds the parity benchmark.

In the first lemma,  we retain the normalization
of Section~\ref{sec:normalization}. In particular, the
evaluation point satisfies $r\in(0,1)$ and $s(r)\ne0$,
the polynomial $h$ is obtained from $s$ by the fixed sign
choice in \eqref{eq:orientation}, and
\begin{align}
d=h(r),\qquad 0<d<\lambda\le1.
\end{align}
The polynomial $h$ satisfies \eqref{lemrange}: it has no
constant term, its absolute coefficient sum is at most
one, and $h(v)\le\lambda$ throughout $[0,1]$.

For $0<d<\lambda\le1$, define
\begin{align}\label{defPhi}
\Phi_\lambda(d)
=
\frac{\lambda(1-d)}
{(1+d)(\lambda-d)(1-\lambda d)}.
\end{align}
Every factor in the denominator is positive on this
domain. This function bounds the factor that converts
the squared derivative of normalized covariance into
Fisher information.

\begin{lemma}[Curvature bound]\label{lem:curvature}
For the normalized Boolean pair described above,
\begin{align}\label{lemcurve}
I_{f,g}(r)\le h'(r)^2\Phi_\lambda(d).
\end{align}
\end{lemma}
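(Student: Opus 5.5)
The plan is to write the Fisher information as $h'(r)^2$ times an explicit four-cell curvature factor that depends only on $d$ and the two endpoint parameters $a,b$. I then bound that factor by eliminating the endpoint parameter that is not $\lambda$. The algebraic simplifications below are my own and should be rechecked; the monotonicity and extremal steps in the last paragraph are conjectured and not yet verified.

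\textbf{Step 1: Fisher information in terms of $s$.} Since $z(v)=pq+\beta s(v)$ and $p,q,\beta$ do not depend on $v$, differentiating \eqref{eq:cells_mm}--\eqref{eq:cells_pp} gives cell derivatives $\pm\beta s'(r)$. This is the same computation as in the zero-covariance case. Dividing each cell probability by $\beta$ and using $pq/\beta=a$, $p(1-q)/\beta=b$, $(1-p)q/\beta=1/b$ and $(1-p)(1-q)/\beta=1/a$, the normalized cells become $a+s$, $b-s$, $1/b-s$ and $1/a+s$, with $s=s(r)$. Hence
\begin{align*}
I_{f,g}(r)=\beta\, s'(r)^2\,K(s),\qquad
K(s)=\frac1{a+s}+\frac1{1/a+s}+\frac1{b-s}+\frac1{1/b-s}.
\end{align*}
The four cells sum to one, so $\beta(a+1/a+b+1/b)=1$. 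Write $A=a+1/a$ and $B=b+1/b$; since $0<a\le b\le1$, we have $A\ge B\ge2$. Pairing the terms gives
\begin{align*}
\beta K(s)=\frac{1}{A+B}\left(\frac{A+2s}{1+As+s^2}+\frac{B-2s}{1-Bs+s^2}\right).
\end{align*}
Since $s'(r)^2=h'(r)^2$, the lemma reduces to showing $\beta K(s)\le\Phi_\lambda(d)$ in both orientations.

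\textbf{Step 2: the two orientations.}
\begin{itemize}
\item If $s(r)>0$, then $d=s$ and $\lambda=b$. The $b$-pair simplifies through $(b-d)(1/b-d)=(b-d)(1-bd)/b$, which produces exactly the factors $(b-d)(1-bd)$ in the denominator of $\Phi_b(d)$. The remaining freedom is $A\in[B,\infty)$, that is, $a\in(0,b]$.
\item If $s(r)<0$, then $d=-s$ and $\lambda=a$. The singular pair is the $a$-pair, with denominator $(a-d)(1-ad)/a$. The free parameter is now $B\in[2,A]$, that is, $b\in[a,1]$.
\end{itemize}
In each case I fix $d$ and $\lambda$ and maximize $\beta K$ over the free parameter. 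This expression has the form $(F(X)+C)/(X+Y)$, where $X$ is the free parameter, $F(X)=(X\pm2d)/(1\pm Xd+d^2)$ and $C$ is the fixed singular-pair term. One can check that $F'(X)$ has the sign of $1-d^2>0$, and a comparison of the type $F'(X)(X+Y)$ versus $F+C$ should decide monotonicity.

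\textbf{Step 3: the extremal configuration (main obstacle).} I expect the maximum to occur at the endpoint where the free parameter equals $\lambda$'s partner, $a=b$ (so $A=B$). This corresponds to $q=\tfrac12$, possibly together with the boundary endpoint $b=1$ in the negative case. At that point the expression should collapse to $\lambda(1-d)/\bigl((1+d)(\lambda-d)(1-\lambda d)\bigr)$ after an explicit rational simplification. The hard part is proving the monotonicity in the free parameter uniformly in $d\in(0,\lambda)$. If direct differentiation is messy, I would first try to clear denominators and show that the difference $\Phi_\lambda(d)-\beta K$ equals a product of manifestly nonnegative factors, such as $(A-B)$ times a positive polynomial in $d$, $a$ and $b$. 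In doing so I would use only $0<d<\lambda\le1$ and $a\le b$.
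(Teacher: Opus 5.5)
Your Step 1 is correct, but everything that makes the lemma true is deferred to Step 3, and the conjecture you make there is wrong. Let $X$ denote the free parameter ($X=A$ when $s(r)>0$, $X=B$ when $s(r)<0$). The curvature factor $\beta K$ equals $\Phi_\lambda(d)$ only when $X=2$, that is, only when the endpoint other than $\lambda$ equals $1$.

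\begin{itemize}
\item In the positive orientation with $\lambda=b<1$, the value at $a=b$ is $\Phi_\lambda(d)\,\lambda(1+d)^2/[(\lambda+d)(1+\lambda d)]$. This is strictly smaller than $\Phi_\lambda(d)$, so no rational simplification will make it collapse to $\Phi_\lambda(d)$.
\item In the negative orientation, $\beta K$ increases with $b$. Its maximum over $b\in[a,1]$ is therefore at $b=1$, the end opposite to $a=b$.
\item The factorization you hope for does not exist. In fact $\Phi_\lambda(d)-\beta K=\Phi_\lambda(d)\,d(X-2)/(1+Xd+d^2)$, which is positive at $A=B$ whenever $B>2$. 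It is not a multiple of $A-B$.
\end{itemize}

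The missing step is to finish your own pairing computation. Over a common denominator, the numerator is
\[
(A+2s)(1-Bs+s^2)+(B-2s)(1+As+s^2)=(A+B)(1-s^2),
\]
because the terms in $s$ and $s^3$ cancel. The factor $A+B=1/\beta$ then cancels, giving
\[
\beta K(s)=\frac{1-s^2}{(1+As+s^2)(1-Bs+s^2)}.
\]
Put $s=\pm d$. The factor attached to $\lambda$ is $1-(\lambda+\lambda^{-1})d+d^2=(\lambda-d)(1-\lambda d)/\lambda$. The other factor is $1+Xd+d^2\ge(1+d)^2$, because $X$ is of the form $\eta+\eta^{-1}\ge2$. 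Hence $\beta K\le\Phi_\lambda(d)$ for every positive value of the second endpoint. No monotonicity argument is needed, and neither is $a\le b$ or $b\le1$.

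This is the paper's proof in different notation. It writes the cells as $\beta(\lambda-d)$, $\beta(\lambda^{-1}-d)$, $\beta(\eta+d)$, $\beta(\eta^{-1}+d)$ and derives the exact factorization $W_{\lambda,\eta}(d)=\Phi_\lambda(d)\,\eta(1+d)^2/[(\eta+d)(1+\eta d)]$. It then bounds the second factor by one via $(\eta+d)(1+\eta d)-\eta(1+d)^2=d(1-\eta)^2\ge0$.
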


To explain the form of this bound, recall that the output
marginals do not depend on the correlation parameter.
Consequently, all four output probabilities vary through
the single quantity $z(v)=pq+\beta s(v)$. Their derivatives
at $v=r$ have magnitude $\beta|h'(r)|$, so the Fisher
information is $h'(r)^2$ multiplied by a factor determined
by the four output probabilities. Lemma~\ref{lem:curvature}
bounds this factor using only the current normalized
covariance $d$ and its upper endpoint $\lambda$.
The proof in Section~\ref{sec:curvature} shows that this
bound holds for arbitrary, possibly unequal, output
biases.





The next lemma builds a bridge between the bound in Lemma \ref{lem:curvature} and the desired bound
$M_n(r)$.

\begin{lemma}[Scalar envelope]\label{lem:envelope}
Let $n$ be a positive integer, and let
\begin{align}
h(v)=\sum_{k=1}^n a_kv^k
\end{align}
have real coefficients satisfying
\begin{align}
\sum_{k=1}^n|a_k|\le1.
\end{align}
Suppose $0<\lambda<1$ and
\begin{align}
h(v)\le\lambda\qquad(0\le v\le1).
\end{align}
Fix $r\in(0,1)$ such that $d=h(r)\in(0,\lambda)$.
Then, we have 
\begin{align}\label{eq:scalar_minimum_bound}
h'(r)^2\Phi_\lambda(d)
\le M_n(r).
\end{align}
Equivalently, the same inequality holds after taking the
supremum over all admissible polynomials $h$. 
\end{lemma}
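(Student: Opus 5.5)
My plan is to split according to the sign of the slope $h'(r)$, and then let the size of $d$ relative to $\lambda$ decide which estimate to use. Two facts organize everything. First, \eqref{signedCS} already gives
\[
h'(r)^2\le M_n(r)(1-d^2),
\]
so it suffices to show $(1-d^2)\Phi_\lambda(d)\le1$, or else to obtain a sharper derivative bound where this fails. Now
\[
(1-d^2)\Phi_\lambda(d)=\frac{\lambda(1-d)^2}{(\lambda-d)(1-\lambda d)},
\]
and since $\lambda(1-d)^2-(\lambda-d)(1-\lambda d)=d(\lambda^2-2\lambda d+1)-\ldots$, a direct check gives
\[
\lambda(1-d)^2-(\lambda-d)(1-\lambda d)=-d(1-\lambda)^2+\ldots
\]
I would compute this difference exactly. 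I expect $(1-d^2)\Phi_\lambda(d)\le1$ to hold whenever $d$ is below an explicit threshold $d_*(\lambda)$, which should be of order $\lambda/2$ for small $\lambda$ and should degenerate as $\lambda\to1$. That settles the regime where $d$ is far from the ceiling. The singular regime is $d$ close to $\lambda$, where $\Phi_\lambda(d)\sim \lambda(1-\lambda)/\bigl((1+\lambda)(1-\lambda^2)(\lambda-d)\bigr)$ blows up like $1/(\lambda-d)$.

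For negative slopes $h'(r)<0$, the key observation is that $h$ exceeds $d$ to the left of $r$. By the global constraint $h\le\lambda$ on $[0,r]$ and $h(0)=0$, the value must climb from $0$ up past $d$ and then come back down. I would bound $|h'(r)|$ using a Taylor or mean-value argument on $[r,1]$, using the lower bound $h\ge-1$, combined with the spectral bound. I expect that the negative-slope case needs only that $h$ cannot decrease steeply without violating the coefficient budget at $v=1$, where $|h(1)|\le1$. I would also try to show that for negative slope one has the improved bound $h'(r)^2\le M_n(r)(1-d^2)\cdot\frac{\lambda-d}{\lambda}$-type, or simply reduce to the case $d\le d_*$.

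The main case is a positive slope with $d$ near $\lambda$. Here the ceiling must be used at points $v>r$: since $h(v)\le\lambda$ on $[r,1]$ and $h'(r)>0$, the function must turn over quickly. Writing $h(v)\ge d+h'(r)(v-r)-\tfrac12\sup|h''|(v-r)^2$, the bound $\sup_{[0,1]}|h''|\le\sum k(k-1)|a_k|$ is controlled by the budget but could be as large as $n^2$. To avoid that loss, I would instead control $h''$ near $r$ spectrally, via $|h''(v)|\le\sum k(k-1)|a_k|v^{k-2}$. I would then apply Cauchy--Schwarz against the same weights $|a_k|$ that appeared in \eqref{eq:spectral_cs_slope}, in the form
\[
\Bigl(\sum|a_k|kr^{k-1}\Bigr)^2\le L\sum|a_k|k^2r^{2k-2},
\]
so that derivative mass forces curvature mass. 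Optimizing over the step $v-r$ should give $h'(r)^2\lesssim 2(\lambda-d)\,\sup h''$, which is the factor $(\lambda-d)$ needed to cancel the singularity in $\Phi_\lambda$. I would then compare the resulting bound with $M_n(r)$ by case analysis in $r$ (near $0$ versus near $1$), in the logarithmic scale $-\log r$ times the effective degree, and in $\lambda$. I expect this last step to be the hard part: matching constants exactly so that the final bound is sharp against the parity envelope rather than merely within a constant factor. The likely route is interpolating between the spectral bound and the Taylor bound through a convex combination parameter.
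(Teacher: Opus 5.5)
\textbf{Gap 1: the spectral bound alone never suffices.} Your first organizing step fails. Finishing the computation you left open gives
\[
\lambda(1-d)^2-(\lambda-d)(1-\lambda d)=d(1-\lambda)^2,
\qquad\text{so}\qquad
(1-d^2)\Phi_\lambda(d)=1+\frac{d(1-\lambda)^2}{(\lambda-d)(1-\lambda d)}>1
\]
for every $0<d<\lambda<1$. So \eqref{signedCS} by itself proves the inequality at no admissible point, and no threshold of the kind you anticipate exists. It suffices only when $\lambda=1$, which is why the proof of Theorem~\ref{main} treats that case separately.

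The missing idea is to use the ceiling already at $v=1$ to sharpen the spectral bound. Write $h=B-A$ with $A,B$ having nonnegative coefficients of total mass $e+o=1$. Then $h(1)=o-e\le\lambda$ forces $o\le(1+\lambda)/2$. For $u=h'(r)\ge0$ one has $u\le B'(r)$, whence $u^2\le M_n(r)(o^2-d^2)\le M_n(r)\bigl((1+\lambda)^2/4-d^2\bigr)$. This refined bound is what disposes of positive slopes away from the ceiling (in the paper, $d\le0.741\lambda$ and related cutoffs). It gives nothing for negative slopes, where the plain bound fails for small $d$ as well.

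\textbf{Gap 2: the negative-slope plan works on the wrong side of $r$.} Negative slopes with $d$ arbitrarily close to $\lambda$ do occur, for example just past an interior maximum of $h$. There $\Phi_\lambda(d)$ grows like $1/(\lambda-d)$, so you need $h'(r)^2=O(\lambda-d)$. Such a factor can only come from the ceiling on the side where $h$ increases, namely $[0,r]$. A Taylor or mean-value argument on $[r,1]$ using $h\ge-1$ or $|h(1)|\le1$ never involves $\lambda-d$, and no reduction to small $d$ is available.

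The paper instead does the following:
\begin{itemize}
\item It evaluates $\lambda-h\ge0$ at a point $r-\ell<r$.
\item It bounds $(\lambda-h)''\le A''\le A''(r)$ on $[0,r]$; monotonicity of $A''$ is what makes the backward direction tractable.
\item It ties $A''(r)$ and the slope to $M_n(r)$ through degree-wise inequalities such as $k(k-1)r^{k-2}\le4M_n(r)(1-r^k)$, after removing the degree-one term from the slope.
\item Combined with $h'(r)^2\le4M_n(r)A(r)\bigl(A(1)-A(r)\bigr)$, this covers all $d$ at once.
\end{itemize}

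\textbf{Gap 3: the positive-slope step you defer is the whole difficulty.} The forward Taylor idea is right in spirit, but the tools you name do not supply the comparison with $M_n(r)$. Relative to $M_n(r)$, the available curvature bound on the negative part deteriorates as $v$ moves from $r$ toward $1$; along $v=r^{1-\theta}$ it is of order $(1-\theta)^{-2}$. So step length, curvature bound, and comparison with $M_n(r)$ must be designed together. The paper:
\begin{itemize}
\item integrates path-dependent curvature bounds along $v=r+\theta/2$ for $r\le1/2$, and along $v=r^{1-\theta}$ when $r\ge1/2$ and $r^n<1/2$;
\item uses the substitution $z=v^n$ when $r^n\ge1/2$;
\item lower-bounds $M_n(r)$ through a degree $k$ with $k(-\log r)\in[\tfrac12\log3,\log3]$.
\end{itemize}

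Your Cauchy--Schwarz inequality bounds the slope by the weights $k^2r^{2k-2}$ at $r$. A forward step needs the opposite kind of estimate: an upper bound on $A''$ beyond $r$. Replacing $h''$ by $|h''|$ also discards the contribution of the positive part, in particular the log-convexity of $\theta\mapsto B(r^{1-\theta})$. That log-convexity enters the Riccati-type comparison of Appendix~\ref{sec:forward} and is precisely what closes the central range $1/5\le\lambda\le4/5$.
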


The global constraint in Lemma~\ref{lem:envelope} is essential:
it controls how rapidly $h$ can vary near a point where
its value is close to $\lambda$. The proof combines this
constraint with the coefficient budget to control $h'(r)^2\Phi_\lambda(d)$ in \eqref{eq:scalar_minimum_bound}.

Lemma~\ref{lem:curvature} concerns actual Boolean encoders.
Lemma~\ref{lem:envelope} is stated for a larger class of
signed polynomials, without requiring a Boolean
representation. This stronger formulation is sufficient
for our purpose: \eqref{lemrange} verifies its polynomial
hypotheses for the normalized Boolean pair.
The scalar lemma is proved
in Section~\ref{sec:scalar}.

\begin{figure}[t]
\centering
\includegraphics[width=\linewidth]{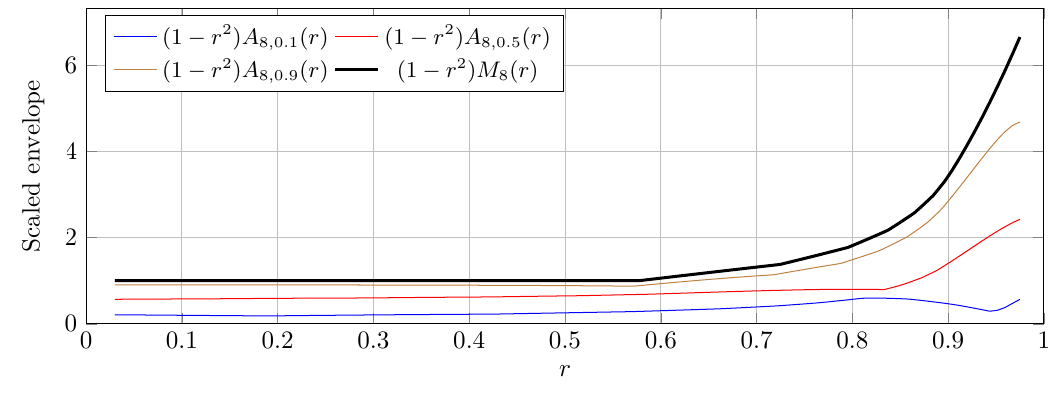}
\caption{Numerical illustration of the 
scalar envelope inequality for $n=8$ and
$\lambda=0.1,0.5,0.9$. The curves show
$(1-r^2)A_{8,\lambda}(r)$ and
$(1-r^2)M_8(r)$. The numerical envelopes
approximate the exact suprema but are not
certified globally optimal.}
\label{fig:scalar-envelope}
\end{figure}

Figure~\ref{fig:scalar-envelope} illustrates the 
scalar envelope inequality for $n=8$ and
$\lambda\in\{0.1,0.5,0.9\}$.
Let
\begin{align}
\mathcal H_{n,\lambda}
=
\left\{
h(v)=\sum_{k=1}^n a_kv^k:
\sum_{k=1}^n|a_k|\le1,\ 
h(v)\le\lambda\ \text{for all }v\in[0,1]
\right\}.
\end{align}
For $r\in(0,1)$, define
\begin{align}
A_{n,\lambda}(r)
=
\sup_{\substack{h\in\mathcal H_{n,\lambda}\\
                 0<h(r)<\lambda}}
h'(r)^2\Phi_\lambda(h(r)).
\end{align}
The curves corresponding to $A_{n,\lambda}(r)$ are
obtained by numerical optimization over the polynomial
coefficients. All functions are multiplied by $1-r^2$
to facilitate comparison. The computed values lie
below the parity benchmark at the sampled correlations,
consistently with Lemma~\ref{lem:envelope}. Among the three values of
$\lambda$ considered, the computed curve for
$\lambda=0.9$ lies closest to the parity benchmark
over the plotted range. These computations are
illustrative and are not used in the proof; they
do not establish monotonicity in $\lambda$ or
convergence to $M_8(r)$ as $\lambda\uparrow1$.
The changes in the slope of $M_8(r)$ correspond to
transitions in the degree attaining the maximum
in its definition.


\subsection{Proof of Theorem~\ref{main} from the lemmas}\label{sec:reduction}

The reductions and estimates in
Section~\ref{sec:normalization} already establish the
theorem when either encoder is constant, when the
correlation is zero, or when the output covariance
vanishes. Reflection of the second input reduces
negative correlations to positive ones, with the same
Fisher information at the reflected point and the same
parity benchmark. It therefore remains to consider
nonconstant encoders at a fixed correlation $r\in(0,1)$
with $s(r)\ne0$.

Apply the sign choice in \eqref{eq:orientation}, and put
$d=h(r)$. The resulting parameters satisfy
$0<d<\lambda\le1$. We distinguish the cases
$0<\lambda<1$ and $\lambda=1$.

Suppose first that $0<\lambda<1$. By \eqref{lemrange},
the polynomial $h$ has degree at most $n$, has no
constant term, and satisfies
\begin{align}
\sum_{k=1}^n|a_k|\le1,\qquad
h(v)\le\lambda\quad(0\le v\le1).
\end{align}
Thus all hypotheses of Lemma~\ref{lem:envelope} hold. Combining
Lemmas~\ref{lem:curvature} and~\ref{lem:envelope} gives
\begin{align}
I_{f,g}(r)
\le h'(r)^2\Phi_\lambda(d)
\le M_n(r).
\end{align}


Now suppose that $\lambda=1$. In this case, the
curvature factor simplifies to
\begin{align}\label{balancedcancel}
\Phi_1(d)
=
\frac{1}{1-d^2}.
\end{align}
Combining Lemma~\ref{lem:curvature} with the spectral
inequality \eqref{signedCS} and the identity $d=h(r)$
gives
\begin{align}
I_{f,g}(r)
&\le h'(r)^2\Phi_1(d)\notag\\
&=\frac{h'(r)^2}{1-h(r)^2}\notag\\
&\le M_n(r).
\end{align}
Thus this case follows directly from the curvature
and spectral estimates, without invoking the scalar
envelope lemma. Together, the two cases establish the
upper bound for every Boolean pair and every
$r\in(-1,1)$. Sharpness of the upper bound was verified in
Section~\ref{sec:Fisherbenchmark}.


We close this section with two remarks.
\begin{enumerate}
    \item 
Matching parities are
balanced, so $p=q=1/2$ and $a=b=1$, giving
$\lambda=\eta=1$. However, the case $\lambda=1$ also
includes biased pairs. For example, if $p=q<1/2$ and
$s(r)>0$, then
$\lambda=b=\sqrt{\omega_p/\omega_q}=1$.
Thus $\lambda=1$ does not imply that the outputs are
balanced.

\item The theorem establishes the optimality of matching
parities without asserting uniqueness. Indeed, as
observed in \cite{chen}, a broader class of matching
Boolean functions can attain the bound. Specifically,
suppose $f=g$ is a level-$k$ Boolean function, meaning
that $\widehat f(S)=0$ whenever $|S|\ne k$, where
$k\ge1$. Such a function is balanced, and Parseval's
identity gives
\begin{align}
\E_r[f(X^n)f(Y^n)]
=r^k\sum_{|S|=k}\widehat f(S)^2
=r^k.
\end{align}
Its output pair therefore has the same distribution
and Fisher information as matching degree-$k$ parities.
Consequently, it attains the upper bound whenever $k$
maximizes the parity benchmark at the correlation
under consideration.
\end{enumerate}


\section{Proof of Lemma~\ref{lem:curvature}}
\label{sec:curvature}

Fix $r\in(0,1)$ with $s(r)\ne0$, and retain the
normalization of Section~\ref{sec:normalization}.
In particular, $h$ is obtained from $s$ by the fixed
sign choice in \eqref{eq:orientation}, and
\begin{align}
d=h(r),\qquad 0<d<\lambda\le1,\qquad 0<\eta\le1.
\end{align}
We first express the four output probabilities in terms
of $d$ and the two endpoints $\lambda,\eta$. We then
compute the factor multiplying $h'(r)^2$ in the Fisher
information and bound it by $\Phi_\lambda(d)$.

\paragraph{Expressing the four output probabilities}
The definitions of $a$, $b$, and $\beta$ give
\begin{align}\label{eq:marginal_products}
pq=\beta a,\qquad
p(1-q)=\beta b,\qquad
(1-p)q=\beta b^{-1},\qquad
(1-p)(1-q)=\beta a^{-1}.
\end{align}
For example,
\begin{align}
\frac{(1-p)q}{\beta}
=
\sqrt{\frac{(1-p)q}{p(1-q)}}
=b^{-1},
\end{align}
and the other identities follow in the same way.
Since $z(v)=pq+\beta s(v)$, the general four-cell
formulas can therefore be written as
\begin{align}
&P^{f,g}_v(-1,-1)=\beta(a+s(v)),\\
&P^{f,g}_v(-1,1)=\beta(b-s(v)),\\
&P^{f,g}_v(1,-1)=\beta(b^{-1}-s(v)),\\
&P^{f,g}_v(1,1)=\beta(a^{-1}+s(v)).
\end{align}

If $s(r)>0$, then $h=s$, $\lambda=b$, and $\eta=a$.
Evaluating at $v=r$ gives
\begin{align}
&P^{f,g}_r(-1,-1)=\beta(\eta+d),\\
&P^{f,g}_r(-1,1)=\beta(\lambda-d),\\
&P^{f,g}_r(1,-1)=\beta(\lambda^{-1}-d),\\
&P^{f,g}_r(1,1)=\beta(\eta^{-1}+d).
\end{align}
If $s(r)<0$, then $h=-s$, $\lambda=a$, and $\eta=b$,
so $s(r)=-d$ and
\begin{align}
&P^{f,g}_r(-1,-1)=\beta(\lambda-d),\\
&P^{f,g}_r(-1,1)=\beta(\eta+d),\\
&P^{f,g}_r(1,-1)=\beta(\eta^{-1}+d),\\
&P^{f,g}_r(1,1)=\beta(\lambda^{-1}-d).
\end{align}
Thus, in either case, the four output probabilities,
up to their order, are
\begin{align}\label{eq:oriented_cells}
\beta(\lambda-d),\quad
\beta(\lambda^{-1}-d),\quad
\beta(\eta+d),\quad
\beta(\eta^{-1}+d).
\end{align}
All four are positive. In particular,
$\lambda-d>0$, and
$\lambda^{-1}-d\ge\lambda-d>0$ because $\lambda\le1$.

Summing \eqref{eq:oriented_cells} and using the fact
that the probabilities sum to one gives
\begin{align}
1=\beta\bigl(\lambda+\lambda^{-1}
                  +\eta+\eta^{-1}\bigr).
\end{align}
Since
\begin{align}
\lambda+\lambda^{-1}+\eta+\eta^{-1}
=
\frac{(\lambda+\eta)(1+\lambda\eta)}{\lambda\eta},
\end{align}
we obtain
\begin{align}\label{eq:beta_endpoints}
\beta=\frac{\lambda\eta}
{(\lambda+\eta)(1+\lambda\eta)}.
\end{align}

\paragraph{The exact curvature factor}
The quantities $\beta$, $\lambda$, and $\eta$ remain
fixed as the variable correlation $v$ varies: they
depend on the output marginals and the sign choice
made at the evaluation point $r$. The sign choice
also ensures that $s'(r)=h'(r)$ or $s'(r)=-h'(r)$.
Consequently, each of the four output probability
derivatives has magnitude $\beta|h'(r)|$.

Substituting these derivatives and
\eqref{eq:oriented_cells} into \eqref{eq:Fisher} yields
\begin{align}\label{eq:exact_curvature}
I_{f,g}(r)=h'(r)^2W_{\lambda,\eta}(d),
\end{align}
where
\begin{align}\label{eq:curvature_W}
W_{\lambda,\eta}(d)
=
\beta\left[
\frac{1}{\lambda-d}
+\frac{1}{\lambda^{-1}-d}
+\frac{1}{\eta+d}
+\frac{1}{\eta^{-1}+d}
\right].
\end{align}
The single factor $\beta$ in this expression results
from dividing each squared derivative, which contains
$\beta^2$, by a probability containing $\beta$.

The exact curvature factor depends on both endpoints.
We will show that it is bounded by
\begin{align}\label{auxPhi}
\Phi_\lambda(d)
=
\frac{\lambda(1-d)}
{(1+d)(\lambda-d)(1-\lambda d)},
\end{align}
which depends only on the upper endpoint $\lambda$
and the current value $d$.

\paragraph{Factoring the curvature}
Pair the first two fractions in \eqref{eq:curvature_W}
and then the last two. Using
\begin{align}
\frac{1}{\lambda^{-1}-d}
=\frac{\lambda}{1-\lambda d},
\qquad
\frac{1}{\eta^{-1}+d}
=\frac{\eta}{1+\eta d},
\end{align}
we obtain
\begin{align}
\frac{1}{\lambda-d}
+\frac{1}{\lambda^{-1}-d}
&=
\frac{1+\lambda^2-2\lambda d}
     {(\lambda-d)(1-\lambda d)},\\
\frac{1}{\eta+d}
+\frac{1}{\eta^{-1}+d}
&=
\frac{1+\eta^2+2\eta d}
     {(\eta+d)(1+\eta d)}.
\end{align}
For the following calculation, abbreviate
\begin{align}
D_-=(\lambda-d)(1-\lambda d),\qquad
D_+=(\eta+d)(1+\eta d).
\end{align}
Both denominators are positive. Bringing the paired
fractions to a common denominator gives
\begin{align}
W_{\lambda,\eta}(d)
=
\beta\,
\frac{(1+\lambda^2-2\lambda d)D_+
      +(1+\eta^2+2\eta d)D_-}
     {D_-D_+}.
\end{align}

To simplify the numerator, first write
\begin{align}
D_-=\lambda-(1+\lambda^2)d+\lambda d^2,
\qquad
D_+=\eta+(1+\eta^2)d+\eta d^2.
\end{align}
Expanding the two products and collecting powers of $d$
then gives
\begin{align}
&(1+\lambda^2-2\lambda d)D_+
 +(1+\eta^2+2\eta d)D_-\notag\\
&\quad=
\bigl[\eta(1+\lambda^2)+\lambda(1+\eta^2)\bigr]\notag\\
&\qquad\quad
-\bigl[\eta(1+\lambda^2)+\lambda(1+\eta^2)\bigr]d^2
\notag\\
&\quad=
(\lambda+\eta)(1+\lambda\eta)(1-d^2).
\label{eq:curvature_numerator}
\end{align}
Here the coefficients of $d$ and $d^3$ cancel.
Substituting \eqref{eq:beta_endpoints} therefore yields
\begin{align}\label{eq:curvature_product}
W_{\lambda,\eta}(d)
=
\frac{\lambda\eta(1-d^2)}
{(\lambda-d)(1-\lambda d)(\eta+d)(1+\eta d)}.
\end{align}
Factoring $1-d^2=(1-d)(1+d)$ and comparing with
\eqref{auxPhi}, we obtain the exact identity
\begin{align}\label{factorization}
W_{\lambda,\eta}(d)
=
\Phi_\lambda(d)\,
\frac{\eta(1+d)^2}{(\eta+d)(1+\eta d)}.
\end{align}

\paragraph{Eliminating the second endpoint}
The second factor in \eqref{factorization} is at most
one. Indeed,
\begin{align}
(\eta+d)(1+\eta d)-\eta(1+d)^2
&=d+\eta^2d-2\eta d\notag\\
&=d(1-\eta)^2
\ge0.
\end{align}
Since $(\eta+d)(1+\eta d)>0$, it follows that
\begin{align}
0<
\frac{\eta(1+d)^2}{(\eta+d)(1+\eta d)}
\le1.
\end{align}
Consequently,
\begin{align}
W_{\lambda,\eta}(d)\le\Phi_\lambda(d).
\end{align}
This comparison is valid for every permissible second
endpoint $\eta$, so it requires no equality between
the output biases. Since $d>0$, equality between the
two curvature factors holds precisely when $\eta=1$.

Finally, multiplying this bound by $h'(r)^2$ and using
\eqref{eq:exact_curvature} gives
\begin{align}\label{curvaturebound}
I_{f,g}(r)\le h'(r)^2\Phi_\lambda(d),
\end{align}
which proves Lemma~\ref{lem:curvature}.

\section{Proof of Lemma~\ref{lem:envelope}}\label{sec:scalar}

\subsection{Organization, coverage, and elementary estimates}
\label{sec:coverage}


Throughout this section, the hypotheses of
Lemma~\ref{lem:envelope} are in force. Thus
\begin{align}
h(v)=\sum_{k=1}^n a_kv^k,\qquad
\sum_{k=1}^n|a_k|\le1,\qquad
h(v)\le\lambda\quad(0\le v\le1),
\end{align}
where $n\ge1$ and $0<\lambda<1$. The evaluation point
$r\in(0,1)$ is fixed, and
\begin{align}
d=h(r)\in(0,\lambda).
\end{align}
Set
\begin{align}\label{eq:scalar_bounds_recalled}
M=M_n(r),\qquad
u=h'(r),\qquad
\delta=\lambda-d,\qquad
t=-\log r,\qquad
 \mathcal I=u^2\Phi_\lambda(d).
\end{align}
Our task is to prove $\mathcal I\le M$. If $u=0$, the definition gives $\mathcal I=0$.
We therefore concentrate on nonzero slopes.

\subsubsection{Proof strategy and coverage}

The main issue is to control $\mathcal I$ when $d$ is close
to the upper bound $\lambda$. In that situation,
the factor $\Phi_\lambda(d)$ can be large, but
the global constraint $h(v)\le\lambda$ restricts
the possible slope at $r$. We exploit this restriction
by examining the polynomial in the direction in
which its value initially increases. If $u<0$,
that direction is toward smaller values of $v$;
if $u>0$, it is toward larger values of $v$.
This explains the initial division according to
the sign of the slope.

For positive slopes, the estimates depend further
on the evaluation point and on the quantity
$nt=n(-\log r)$. Indeed, $r^k=\exp(-kt)$, so $nt$
is the largest logarithmic scale among the monomials
of degree at most $n$. The subsequent subsections
cover the following regimes:
\begin{enumerate}
\item
Section~\ref{sec:decreasing} treats $u<0$ for all
$0<r<1$ and $0<\lambda<1$.

\item
Section~\ref{sec:small} treats $u>0$ and
$0<r\le1/2$ for all $0<\lambda<1$.

\item
Section~\ref{sec:short} treats $u>0$ and $nt\le \log 2$
for all $0<\lambda<1$.

\item
Section~\ref{sec:long} treats $u>0$, $r\ge1/2$,
and $nt > \log 2$, with
$0<\lambda\le1/5$ or $4/5\le\lambda<1$.

\item
Section~\ref{sec:central} treats $u>0$, $r\ge1/2$,
and $nt > \log 2$, with $1/5\le\lambda\le4/5$.
\end{enumerate}

These regimes exhaust the hypotheses of the lemma.
After handling nonpositive slopes, either $r\le1/2$,
which is covered by Section~\ref{sec:small}, or
$r>1/2$. In the latter case, $nt\le \log 2$ is covered
by Section~\ref{sec:short}, while $nt > \log 2$ is
covered by Sections~\ref{sec:long} and
\ref{sec:central}, whose intervals for $\lambda$
together cover $(0,1)$. 

Before turning to these cases, we establish
coefficient estimates and spectral bounds that
will be used throughout the proof. In particular,
the spectral bounds below settle a substantial
part of the positive-slope domain. Subsequent
assumptions such as $d>c_0\lambda$ simply exclude
points already covered by those bounds.

Appendix~\ref{sec:forward} supplies a differential
estimate used in the two long-scale regimes.
It is valid for all positive slopes and is an
auxiliary estimate, rather than an additional
case in the partition above.

\subsubsection{Decomposing the coefficient budget}

Separate the positive and negative coefficients
of $h$ by initially setting
\begin{align}
\alpha_k=\max\{-a_k,0\},\qquad
\beta_k=\max\{a_k,0\}.
\end{align}
Then
\begin{align}
A(v)=\sum_{k=1}^n\alpha_kv^k,\qquad
B(v)=\sum_{k=1}^n\beta_kv^k
\end{align}
have nonnegative coefficients and satisfy $h=B-A$.
Their total coefficient mass is initially
$\sum_k(\alpha_k+\beta_k)=\sum_k|a_k|\le1$.

If this sum is less than one, add
\begin{align}
\frac{1-\sum_{k=1}^n|a_k|}{2}
\end{align}
to both degree-one coefficients $\alpha_1$ and
$\beta_1$, and retain the notation $A$ and $B$
for the resulting polynomials. The added terms
cancel in $B-A$, so this operation changes
neither $h$ nor any of its derivatives. It also
preserves nonnegativity of the coefficients and
the degree bound. We may therefore assume
\begin{align}
e:=A(1),\qquad o:=B(1),\qquad e+o=1.
\end{align}
The letters $e$ and $o$ denote the total coefficient
masses of $A$ and $B$, respectively; they do not
refer to even and odd degrees. After the possible
addition of equal degree-one terms, $A$ and $B$
need not be the disjoint negative and positive
parts, but they still provide a nonnegative-coefficient
decomposition $h=B-A$ with total mass exactly one. Throughout the remainder of the paper, $e=A(1)$ denotes the coefficient mass, whereas $\mathrm e=\exp(1)$ denotes Euler’s number.

Define the gap from the global upper bound by
\begin{align}\label{floor}
F(v):=\lambda-h(v)
=\lambda+A(v)-B(v)\ge0
\qquad(0\le v\le1).
\end{align}
At the evaluation point,
\begin{align}
F(r)=\delta,\qquad F'(r)=-u.
\end{align}
Thus the range constraint is equivalent to
nonnegativity of $F$ on the entire interval.
The later Taylor estimates use this nonnegativity
to relate the local gap $\delta$ to the slope $u$.

Evaluating the range constraint at $v=1$ gives
\begin{align}
h(1)=o-e=1-2e\le\lambda,
\end{align}
and therefore
\begin{align}
e\ge\frac{1-\lambda}{2},\qquad
o\le\frac{1+\lambda}{2}.
\end{align}
Also, nonnegativity of the coefficients implies
\begin{align}
d=B(r)-A(r)\le B(r)\le B(1)=o.
\end{align}
Combining these observations with $e+o=1$ yields
\begin{align}\label{massbounds}
\frac{1-\lambda}{2}\le e\le1-d,\qquad
d\le o\le\frac{1+\lambda}{2}.
\end{align}
In particular, $e>0$ because $\lambda<1$, and
$o>0$ because $d>0$.

\subsubsection{Normalizing the curvature bound}

It is useful to separate the factor $\lambda-d$
from the denominator of $\Phi_\lambda(d)$.
Define
\begin{align}
Q(\lambda,d)
=
\frac{\lambda(1-d)}
{(1+d)(1-\lambda d)}.
\end{align}
Then
\begin{align}
\Phi_\lambda(d)=\frac{Q(\lambda,d)}{\delta},
\end{align}
and the definition of $\mathcal I$ in
\eqref{eq:scalar_bounds_recalled} becomes
\begin{align}\label{eq:normalized_scalar_bound}
\frac{\mathcal I}{M}
=
\frac{u^2}{M}\frac{Q(\lambda,d)}{\delta}.
\end{align}
This form displays the quantity that must be
controlled: the squared slope relative to the
gap from the upper bound.

On $0<d<\lambda<1$, we have
$0<Q(\lambda,d)\le\lambda$. Indeed,
\begin{align}
(1+d)(1-\lambda d)-(1-d)
=d(2-\lambda-\lambda d)\ge0.
\end{align}
The function $Q$ increases with $\lambda$ and
decreases with $d$. These monotonicity properties
follow from
\begin{align}
\frac{\partial Q}{\partial\lambda}
=
\frac{1-d}{(1+d)(1-\lambda d)^2}>0
\end{align}
and
\begin{align}
\frac{\partial Q}{\partial d}
=
\frac{\lambda\bigl[-2+\lambda(1+2d-d^2)\bigr]}
{(1+d)^2(1-\lambda d)^2}<0.
\end{align}
For the last sign, note that
$1+2d-d^2=2-(1-d)^2<2$ and $\lambda<1$.

For the positive-slope arguments, we will also use
the logarithmic normalization
\begin{align}\label{lognormalization}
t=-\log r,\qquad
U=rtu,\qquad
H=Mr^2t^2
=
\max_{1\le k\le n}
\frac{(kt)^2}{\exp(2kt)-1}.
\end{align}
The final expression follows by multiplying each
term in the definition of $M_n(r)$ by $r^2t^2$
and using $r=\exp(-t)$. The factor $rt$ has a
corresponding interpretation for the slope:
along the path $v=r^{1-\theta}$,
\begin{align}
\left.\frac{d}{d\theta}h(r^{1-\theta})
\right|_{\theta=0}
=rt\,h'(r)=U.
\end{align}
Thus $U$ is the initial slope along this path, and
$H$ is the parity benchmark under the same scaling.
In particular,
\begin{align}
\frac{U^2}{H}=\frac{u^2}{M}.
\end{align}
Appendix~\ref{sec:forward} proves
\eqref{Rlemma}, \eqref{Ucaps}, and
\eqref{Zconditions}, which relate this scaled
slope to the global constraint $F\ge0$.

For reference, the principal notation is collected
below.
\begin{center}
\renewcommand{\arraystretch}{1.15}
\begin{tabular}{@{}p{0.23\textwidth}p{0.69\textwidth}@{}}
\toprule
Symbol & Meaning\\
\midrule
$h$ &
Polynomial satisfying the hypotheses of
Lemma~\ref{lem:envelope}.\\
$r$, $n$, $M$ &
Evaluation point, degree bound, and parity
benchmark $M_n(r)$.\\
$d$, $\lambda$, $\delta$ &
Value $h(r)$, global upper bound, and gap
$\lambda-d$.\\
$A,B$; $e,o$ &
Nonnegative-coefficient decomposition $h=B-A$
and masses $e=A(1)$, $o=B(1)$, with $e+o=1$.\\
$F$ &
Gap function $\lambda-h$, nonnegative on $[0,1]$.\\
$u$ &
Derivative $h'(r)$.\\
$t$ &
Logarithmic correlation scale $-\log r$.\\
$U,H$ &
Scaled slope $rtu$ and scaled benchmark $Mr^2t^2$.\\
$\theta$ &
Parameter used later for the paths
$v=r+\theta/2$ and $v=r^{1-\theta}$.\\
\bottomrule
\end{tabular}
\end{center}

\subsubsection{A spectral estimate for nonnegative slopes}

Suppose $u\ge0$. Since $A$ has nonnegative
coefficients, $A'(r)\ge0$, and hence
\begin{align}
0\le u=B'(r)-A'(r)\le B'(r).
\end{align}
The degree-by-degree estimate
\eqref{eq:degree_derivative_bound} gives
\begin{align}
B'(r)
\le
\sqrt M\sum_{k=1}^n
\beta_k\sqrt{(1-r^k)(1+r^k)}.
\end{align}
Applying Cauchy--Schwarz to the factors
$\sqrt{\beta_k(1-r^k)}$ and
$\sqrt{\beta_k(1+r^k)}$, we obtain
\begin{align}
B'(r)^2
&\le
M\left(\sum_{k=1}^n\beta_k(1-r^k)\right)
 \left(\sum_{k=1}^n\beta_k(1+r^k)\right)\notag\\
&=
M\bigl(o-B(r)\bigr)\bigl(o+B(r)\bigr)\notag\\
&=
M\bigl(o^2-B(r)^2\bigr).
\end{align}
Since $B(r)=A(r)+d\ge d>0$, it follows that
\begin{align}\label{oddcs}
u\le B'(r),\qquad
u^2\le M\bigl(o^2-B(r)^2\bigr)
\le M(o^2-d^2).
\end{align}
This estimate uses the positive-coefficient mass
$o$, which is at most $(1+\lambda)/2$ by
\eqref{massbounds}.

\subsubsection{Regions settled by the spectral estimate}

Continue to assume $u\ge0$. 
Combining the defining identity for $\mathcal I$ in 
\eqref{eq:scalar_bounds_recalled} with
\eqref{oddcs} and \eqref{massbounds} gives
\begin{align}
\frac{\mathcal I}{M}
\le
(o^2-d^2)\Phi_\lambda(d)
\le
\left(\frac{(1+\lambda)^2}{4}-d^2\right)
\Phi_\lambda(d).
\end{align}
It is therefore sufficient to verify
\begin{align}\label{eq:spectral_sufficient_test}
\left(\frac{(1+\lambda)^2}{4}-d^2\right)
\Phi_\lambda(d)\le1.
\end{align}
This test uses the coefficient masses and the
value at $r$, together with the endpoint constraint
$h(1)\le\lambda$. It does not yet use the full
constraint $F(v)\ge0$ between the endpoints.

For $0<\lambda<1$, define
\begin{align}
\mathcal Q_\lambda(d)
=
-4(1-\lambda)d^2
+\bigl[\lambda(1-\lambda)-4\bigr]d
+\lambda^2+3\lambda.
\end{align}
Substitution of the formula for $\Phi_\lambda$
and collection of terms gives the exact identity
\begin{align}\label{qpoly}
1-
\left(\frac{(1+\lambda)^2}{4}-d^2\right)
\Phi_\lambda(d)
=
\frac{(1-\lambda)\mathcal Q_\lambda(d)}
{4(1+d)(\lambda-d)(1-\lambda d)}.
\end{align}
The denominator and the factor $1-\lambda$ are
positive. Thus \eqref{eq:spectral_sufficient_test}
holds if and only if $\mathcal Q_\lambda(d)\ge0$.

Moreover,
\begin{align}
\mathcal Q_\lambda'(d)
=
-8(1-\lambda)d+\lambda(1-\lambda)-4
\le-\frac{15}{4}<0
\qquad(d\ge0).
\end{align}
Consequently, if
$\mathcal Q_\lambda(\bar d)\ge0$ at a proposed
upper bound $\bar d$, then the spectral test holds
for every admissible $d\le\bar d$.
We use this observation to obtain several explicit
cutoffs suited to the later arguments.

First, the desired bound $\mathcal I\le M$ holds
whenever any of the following conditions is satisfied:
\begin{align}
d&\le c_0\lambda,
&&c_0=\frac{741}{1000};
\label{c0}\\
d&\le2\lambda-1,
&&\lambda\ge\frac12;
\label{twolambda}\\
d&\le\frac{3\lambda}{4},
&&\lambda\ge\frac13.
\label{threequarters}
\end{align}
To verify \eqref{c0}, substitute $d=c_0\lambda$:
\begin{align}
\frac{\mathcal Q_\lambda(c_0\lambda)}{\lambda}
={}&
(4c_0^2-c_0)\lambda^2
+(1-4c_0^2+c_0)\lambda+3-4c_0\notag\\
={}&
\frac{363831}{250000}\lambda^2
-\frac{113831}{250000}\lambda
+\frac9{250}.
\end{align}
This quadratic has positive leading coefficient
and negative discriminant, since
\begin{align}
113831^2-36000\cdot363831
=-140419439<0.
\end{align}
It is therefore positive for every real $\lambda$,
which proves \eqref{c0}. The other two cutoffs
follow from
\begin{align}
\mathcal Q_\lambda(2\lambda-1)
=14\lambda(1-\lambda)^2
\end{align}
and
\begin{align}
\mathcal Q_\lambda(3\lambda/4)
=\frac{\lambda^2(3\lambda-1)}{2}.
\end{align}
These values are nonnegative on the respective
ranges in \eqref{twolambda} and
\eqref{threequarters}.

For $\lambda\ge1/3$, a slightly stronger cutoff
will also be useful:
\begin{align}\label{dstar}
d\le d_*(\lambda)
:=
\frac{3\lambda}{4}
+\frac{\lambda^2(3\lambda-1)}{12}.
\end{align}
To prove this, first observe that
$3\lambda/4\le d_*(\lambda)\le\lambda$.
Indeed,
\begin{align}
\frac{d_*(\lambda)}{\lambda}
=
\frac34+\frac{\lambda(3\lambda-1)}{12}
\le\frac34+\frac16
=\frac{11}{12}<1.
\end{align}
On $0\le d\le\lambda$,
\begin{align}
-\mathcal Q_\lambda'(d)
&=
8(1-\lambda)d-\lambda(1-\lambda)+4\notag\\
&\le4+7\lambda(1-\lambda)
\le\frac{23}{4}<6.
\end{align}
Thus, starting from $d=3\lambda/4$, increasing
$d$ by an amount $\Delta$ decreases
$\mathcal Q_\lambda(d)$ by at most $6\Delta$.
Taking
\begin{align}
\Delta
=
d_*(\lambda)-\frac{3\lambda}{4}
=
\frac{\lambda^2(3\lambda-1)}{12}
\end{align}
gives
\begin{align}
\mathcal Q_\lambda(d_*(\lambda))
&\ge
\mathcal Q_\lambda(3\lambda/4)-6\Delta\notag\\
&=
\frac{\lambda^2(3\lambda-1)}2
-\frac{\lambda^2(3\lambda-1)}2
=0.
\end{align}
Monotonicity of $\mathcal Q_\lambda$ now proves
the cutoff \eqref{dstar}.

The cutoffs \eqref{c0}--\eqref{dstar} are sufficient
conditions for the scalar inequality, chosen for
their use in the subsequent cases. They need not
be sharp. Their purpose is to settle points that
are sufficiently far from the upper bound
$\lambda$, leaving the later arguments to exploit
the full constraint $F\ge0$ near that bound.
In particular, once \eqref{c0} has been excluded,
we have
\begin{align}
d>c_0\lambda,\qquad
\delta=\lambda-d<(1-c_0)\lambda.
\end{align}
This restriction is used in the central
concavity argument.

\subsection{\texorpdfstring{$u<0$: all $0<r<1$ and
$0<\lambda<1$}{Negative slope: all correlations and ceilings}}
\label{sec:decreasing}

Assume $u=h'(r)<0$. This subsection treats every
$0<r<1$ and $0<\lambda<1$ under this slope condition;
no additional restriction on the parameters is needed.
We use only the defining identity for $\mathcal I$,
\begin{align}
\mathcal I=u^2\Phi_\lambda(d).
\end{align}


Recall that $h=B-A$ and $F=\lambda-h\ge0$ on $[0,1]$.
Since $h'(r)<0$, moving backward from $r$ initially
increases $h$, or equivalently decreases $F$.
The constraint $F\ge0$ therefore limits how large
$|h'(r)|$ can be when $F(r)=\delta$ is small.
We make this observation quantitative by combining
a bound on the slope with an upper bound on $F''$
along the backward interval.

Put
\begin{align}
V=-u=F'(r)>0,\qquad
C=1-d>0,\qquad
A_r=A(r).
\end{align}
All sums below run over $1\le k\le n$.

\paragraph{Step 1: remove the degree-one contribution}

Since $V=A'(r)-B'(r)$, we have
\begin{align}
rV
&=\sum_k k\alpha_kr^k-\sum_k k\beta_kr^k\notag\\
&=\sum_k(k-1)\alpha_kr^k
  -\sum_k(k-1)\beta_kr^k+A(r)-B(r)\notag\\
&=\sum_k(k-1)\alpha_kr^k
  -\sum_k(k-1)\beta_kr^k-d\notag\\
&\le\sum_k(k-1)\alpha_kr^k.
\label{eq:negative_degree_removal}
\end{align}
The last step uses $\beta_k\ge0$ and $d>0$.
The factor $k-1$ eliminates the degree-one term.
This is useful because the remaining terms can
be controlled by the same parity estimates that
bound the second derivative of $A$.

We claim that, for every $1\le k\le n$,
\begin{align}\label{degreebounds}
(k-1)r^{k-1}
\le2\sqrt M\sqrt{r^k(1-r^k)},\qquad
k(k-1)r^{k-2}
\le4M(1-r^k).
\end{align}
For $k=1$, both left sides are zero.
For $k\ge2$, it suffices to prove the second
inequality: multiplying it by $r^k$ and using
$(k-1)^2\le k(k-1)$ gives
\begin{align}
(k-1)^2r^{2k-2}
\le4Mr^k(1-r^k),
\end{align}
whose square root is the first inequality.

To prove the second inequality, first suppose
$k=2j$. The degree-$j$ term in the definition
of $M=M_n(r)$ gives
\begin{align}
M\ge
\frac{j^2r^{2j-2}}{1-r^{2j}}
=
\frac{j^2r^{k-2}}{1-r^k}.
\end{align}
Since $k(k-1)=2j(2j-1)\le4j^2$, it follows that
\begin{align}
k(k-1)r^{k-2}\le4M(1-r^k).
\end{align}

Next suppose $k=2j+1\ge3$. Both $j$ and $j+1$
are at most $n$. Since $M$ dominates the parity
terms of these two degrees, it also dominates
their geometric mean:
\begin{align}
M
&\ge
\left(
\frac{j^2r^{2j-2}}{1-r^{2j}}
\frac{(j+1)^2r^{2j}}{1-r^{2j+2}}
\right)^{1/2}\notag\\
&=
\frac{j(j+1)r^{2j-1}}
{\sqrt{(1-r^{2j})(1-r^{2j+2})}}.
\end{align}
Moreover,
\begin{align}
&(1-r^{2j+1})^2
-(1-r^{2j})(1-r^{2j+2})\notag\\
&=r^{2j}(1-r)^2\ge0,
\end{align}
so
\begin{align}
M\ge
\frac{j(j+1)r^{k-2}}{1-r^k}.
\end{align}
Finally,
$k(k-1)=2j(2j+1)\le4j(j+1)$, which proves
the second inequality in \eqref{degreebounds}
also for odd $k$.

\paragraph{Step 2: bound the slope and the second derivative}

Dividing \eqref{eq:negative_degree_removal} by
$r>0$ and applying the first inequality in
\eqref{degreebounds}, we obtain
\begin{align}
V
\le
2\sqrt M\sum_k\alpha_k\sqrt{r^k(1-r^k)}.
\end{align}
Cauchy--Schwarz gives
\begin{align}
\left(\sum_k\alpha_k\sqrt{r^k(1-r^k)}\right)^2
&\le
\left(\sum_k\alpha_kr^k\right)
\left(\sum_k\alpha_k(1-r^k)\right)\notag\\
&=A_r(e-A_r).
\end{align}
Consequently,
\begin{align}\label{eq:negative_slope_mass}
V^2\le4MA_r(e-A_r).
\end{align}

The second inequality in \eqref{degreebounds}
similarly yields
\begin{align}
A''(r)
&=\sum_k k(k-1)\alpha_kr^{k-2}\notag\\
&\le4M\sum_k\alpha_k(1-r^k)\notag\\
&=4M(e-A_r).
\end{align}
Because $A$ and $B$ have nonnegative coefficients,
$B''(v)\ge0$ and $A''(v)$ is nondecreasing on
$[0,1]$. Therefore, for $0\le v\le r$,
\begin{align}\label{eq:negative_second_derivative}
F''(v)
=A''(v)-B''(v)
\le A''(v)
\le A''(r)
\le4M(e-A_r)=:K_0.
\end{align}
Thus the same quantity $e-A_r$ controls both
the slope estimate and an upper bound on the
second derivative along the backward interval.

\paragraph{Step 3: use nonnegativity along the backward interval}

We first verify that the Taylor step used below
lies inside $[0,r]$. Since
\begin{align}
F(0)=\lambda>\lambda-d=F(r),
\end{align}
the mean value theorem gives a point
$v_0\in(0,r)$ such that
\begin{align}
F'(v_0)=\frac{F(r)-F(0)}{r}
=-\frac dr<0.
\end{align}
Using \eqref{eq:negative_second_derivative},
\begin{align}
V-F'(v_0)
=
\int_{v_0}^r F''(v)\,dv
\le K_0(r-v_0).
\end{align}
The left side is positive, so $K_0>0$.
Since $F'(v_0)<0$, the same inequality implies
\begin{align}
V<K_0(r-v_0)<K_0r.
\end{align}
Hence
\begin{align}
\ell:=\frac{V}{K_0}
\end{align}
satisfies $0<\ell<r$.

Taylor's formula with integral remainder gives
\begin{align}
F(r-\ell)
&=
F(r)-\ell F'(r)
+\int_{r-\ell}^r
\bigl(v-(r-\ell)\bigr)F''(v)\,dv\notag\\
&\le
\delta-\ell V+\frac{K_0\ell^2}{2}.
\end{align}
The integral weight is nonnegative, so the upper
bound on $F''$ applies directly. Since
$F(r-\ell)\ge0$ and $\ell=V/K_0$, we conclude that
\begin{align}
0\le\delta-\frac{V^2}{2K_0}.
\end{align}
Equivalently,
\begin{align}
V^2\le2K_0\delta
=8M(e-A_r)\delta.
\end{align}
Combining this with
\eqref{eq:negative_slope_mass} gives
\begin{align}\label{eq:negative_combined_mass}
V^2
\le4M(e-A_r)\min\{A_r,2\delta\}.
\end{align}
This is the required relation between the slope
and the gap: when $\delta$ is small, nonnegativity
of $F$ forces $V^2$ to be small as well.

\paragraph{Step 4: eliminate the coefficient masses}

Since
\begin{align}
B(r)=A_r+d\le B(1)=o=1-e,
\end{align}
we have
\begin{align}
e-A_r\le1-d-2A_r=C-2A_r.
\end{align}
Also $e-A_r\ge0$, so $0\le A_r\le C/2$.
It follows from \eqref{eq:negative_combined_mass}
that
\begin{align}
\frac{V^2}{M}
\le
4(C-2A_r)\min\{A_r,2\delta\}.
\end{align}
We now maximize the expression on the right
over $0\le A_r\le C/2$.

For $A_r\le2\delta$, this expression is the
concave quadratic $4A_r(C-2A_r)$, whose maximum
occurs at $A_r=C/4$. For $A_r\ge2\delta$, it is
$8\delta(C-2A_r)$, which decreases with $A_r$.
If $2\delta\le C/4$, the two pieces therefore
attain their common maximum at $A_r=2\delta$.
If $2\delta\ge C/4$, the first piece attains
the overall maximum at $A_r=C/4$. Thus
\begin{align}\label{eq:negative_slope_final}
\frac{V^2}{M}
\le
\begin{cases}
8\delta(C-4\delta),&0<\delta\le C/8,\\[2pt]
C^2/2,&\delta\ge C/8.
\end{cases}
\end{align}
The two expressions agree when $\delta=C/8$.

\paragraph{Step 5: apply the defining identity for $\mathcal I$}

Since $u^2=V^2$, the defining identity for $\mathcal I$ gives
\begin{align}
\frac{\mathcal I}{M}
=\frac{V^2}{M}\Phi_\lambda(d).
\end{align}
We show that each branch of
\eqref{eq:negative_slope_final}, multiplied by
$\Phi_\lambda(d)$, is at most one.

First suppose $\delta\ge C/8$, and set
\begin{align}
\lambda_0=d+\frac C8=\frac{1+7d}{8}.
\end{align}
Then $d<\lambda_0\le\lambda<1$.
For fixed $d$, the curvature factor is
nonincreasing in $\lambda$, because
\begin{align}
\frac{\partial\Phi_\lambda(d)}{\partial\lambda}
=
-\frac{d(1-d)(1-\lambda^2)}
{(1+d)(\lambda-d)^2(1-\lambda d)^2}
\le0.
\end{align}
Consequently,
\begin{align}
\frac{\mathcal I}{M}
&\le\frac{C^2}{2}\Phi_\lambda(d)\notag\\
&\le\frac{C^2}{2}\Phi_{\lambda_0}(d)\notag\\
&=
\frac{4\lambda_0C^2}
{(1+d)(1-\lambda_0d)}.
\end{align}
The last equality uses $1-d=C$ and
$\lambda_0-d=C/8$. To show that this ratio is
at most one, subtract its numerator from its
positive denominator:
\begin{align}
&(1+d)(1-\lambda_0d)-4\lambda_0C^2\notag\\
&=
\frac{C}{8}(35d^2-9d+4).
\end{align}
The right side is positive, since $C>0$ and
\begin{align}
35d^2-9d+4
=
35\left(d-\frac9{70}\right)^2
+\frac{479}{140}>0.
\end{align}
This proves the desired bound when $\delta\ge C/8$.

Now suppose $0<\delta\le C/8$. Using the first
branch of \eqref{eq:negative_slope_final} and
cancelling $\delta=\lambda-d$, we obtain
\begin{align}
\frac{\mathcal I}{M}
&\le8\delta(C-4\delta)\Phi_\lambda(d)\notag\\
&=
\frac{8\lambda C(C-4\delta)}
{(1+d)(1-\lambda d)}.
\end{align}
It is therefore enough to show
\begin{align}
N:=(1+d)(1-\lambda d)
-8\lambda C(C-4\delta)\ge0.
\end{align}
Introduce the dimensionless variable
\begin{align}
w=\frac{8\delta}{C}\in(0,1].
\end{align}
Then
\begin{align}
\delta=\frac{w(1-d)}8,\qquad
\lambda=d+\frac{w(1-d)}8.
\end{align}
Substituting these identities into $N$ and
collecting powers of $d$ gives
\begin{align}
\frac{8N}{C}
={}&(72-41w+4w^2)d^2\notag\\
&+(-48+47w-8w^2)d
 +(8-8w+4w^2).
\label{eq:negative_gap_quadratic}
\end{align}
For $0\le w\le1$, the leading coefficient is
positive; for example,
\begin{align}
72-41w+4w^2\ge31>0.
\end{align}
The discriminant of the quadratic in $d$ is
\begin{align}
&(-48+47w-8w^2)^2\notag\\
&\quad
-4(72-41w+4w^2)(8-8w+4w^2)\notag\\
&=w(32w^2+385w-896)\le0.
\end{align}
Indeed,
\begin{align}
32w^2+385w-896
\le32+385-896=-479<0
\end{align}
on this interval. A quadratic with positive
leading coefficient and nonpositive discriminant
is nonnegative for every real value of its
argument. Thus \eqref{eq:negative_gap_quadratic}
implies $N\ge0$, proving the desired bound also
when $\delta\le C/8$.

We have therefore established $\mathcal I\le M_n(r)$
for every negative slope $u<0$, throughout the full
ranges $0<r<1$ and $0<\lambda<1$.

\subsection{\texorpdfstring{$u>0$, $0<r\le1/2$:
all $0<\lambda<1$}{Positive slope: small correlations,
all ceilings}}
\label{sec:small}

Assume $u=h'(r)>0$ and $0<r\le1/2$. We prove the
scalar inequality throughout $0<\lambda<1$ in this
regime. Since the slope is positive, moving forward
from $r$ initially increases $h$ and decreases the
gap function $F=\lambda-h$. We use $F\ge0$ along
a forward path to bound $u^2/\delta$, where
$\delta=F(r)=\lambda-d$.

The restriction $r\le1/2$ also gives
$r^k\le1/2$ for every positive degree $k$.
Consequently, the value of $B(r)$ is at most
half its total coefficient mass. This additional
constraint will allow us to complete the proof:
for $\lambda\ge1/2$, the earlier spectral test
suffices, while for $\lambda<1/2$ we combine
the mass constraint with the forward estimate.

\subsubsection{An elementary logarithmic inequality}

We will use
\begin{align}\label{logineq}
x-\log(1+x)
\ge
\frac{x^2}{2+(4/3)x},
\qquad x\ge0.
\end{align}
To verify it, subtract the right side from the
left side. The difference vanishes at $x=0$,
and its derivative is
\begin{align}
\frac{d}{dx}
\left(
x-\log(1+x)-\frac{x^2}{2+(4/3)x}
\right)
=
\frac{x^3}{(1+x)(3+2x)^2}\ge0.
\end{align}
Thus the difference is nonnegative on $[0,\infty)$.

\subsubsection{A bound on the second derivative along a forward path}

We first establish the monomial estimate
\begin{align}\label{eq:small_monomial_curvature}
k(k-1)v^{k-2}
\le\frac{3}{4(1-v)^2},
\qquad \frac12\le v<1,\quad k\ge2.
\end{align}
Equivalently, we must bound
$k(k-1)v^{k-2}(1-v)^2$ by $3/4$.

For $k=2$, this expression is
$2(1-v)^2\le1/2$. For $k=3$, it is
$6v(1-v)^2$, whose derivative is
$6(1-v)(1-3v)\le0$ on $[1/2,1)$.
Its maximum on this interval is therefore its
value $3/4$ at $v=1/2$.

For $k\ge4$, differentiation of
$v^{k-2}(1-v)^2$ shows that its maximum on
$[1/2,1]$ occurs at
\begin{align}
v=\frac{k-2}{k}=1-\frac2k.
\end{align}
The corresponding maximum is
\begin{align}
4\left(1-\frac1k\right)
 \left(1-\frac2k\right)^{k-2}.
\end{align}
This quantity equals $3/4$ at $k=4$ and is
nonincreasing thereafter. Indeed, treating $k$
as a real variable, its logarithmic derivative is
\begin{align}
\frac{1}{k(k-1)}
+\log\left(1-\frac2k\right)+\frac2k.
\end{align}
The elementary inequality
$\log(1-y)\le-y-y^2/2$ for $0\le y<1$
gives
\begin{align}
\frac{1}{k(k-1)}
+\log\left(1-\frac2k\right)+\frac2k
\le
\frac{1}{k(k-1)}-\frac2{k^2}
\le0.
\end{align}
This proves \eqref{eq:small_monomial_curvature}
for all $k\ge2$.

Consider now the forward path
\begin{align}
v=r+\frac{\theta}{2},
\qquad 0\le\theta<1.
\end{align}
Because $r\le1/2$, this path remains in $[0,1)$
and satisfies
\begin{align}
r+\frac{\theta}{2}
\le\frac{1+\theta}{2}.
\end{align}
The polynomial $A''$ is nondecreasing on $[0,1]$,
since $A$ has nonnegative coefficients. Applying
\eqref{eq:small_monomial_curvature} at
$v=(1+\theta)/2$ therefore gives
\begin{align}
A''\left(r+\frac{\theta}{2}\right)
&\le A''\left(\frac{1+\theta}{2}\right)\notag\\
&\le
\sum_{k=2}^n
\alpha_k\frac{3}{(1-\theta)^2}\notag\\
&\le\frac{3e}{(1-\theta)^2}.
\label{eq:small_A_curvature}
\end{align}

Define
\begin{align}
G(\theta)=F\left(r+\frac{\theta}{2}\right).
\end{align}
Then $G(\theta)\ge0$, and
\begin{align}
G(0)=\delta,\qquad
G'(0)=\frac12F'(r)=-\frac u2.
\end{align}
Since $F''=A''-B''\le A''$, the chain rule and
\eqref{eq:small_A_curvature} yield
\begin{align}
G''(\theta)
=
\frac14F''\left(r+\frac{\theta}{2}\right)
\le\frac{3e/4}{(1-\theta)^2}.
\end{align}

Taylor's formula with integral remainder now gives
\begin{align}
G(\theta)
&=
\delta-\frac u2\theta
+\int_0^\theta(\theta-y)G''(y)\,dy\notag\\
&\le
\delta-\frac u2\theta
+\frac{3e}{4}
\int_0^\theta\frac{\theta-y}{(1-y)^2}\,dy\notag\\
&=
\delta-\frac u2\theta
+\frac{3e}{4}\bigl[-\log(1-\theta)-\theta\bigr].
\end{align}
Combining this upper bound with $G(\theta)\ge0$
shows that
\begin{align}\label{eq:small_forward_gap}
\delta
\ge
\frac u2\theta
-\frac{3e}{4}\bigl[-\log(1-\theta)-\theta\bigr]
\qquad(0\le\theta<1).
\end{align}

\subsubsection{Optimizing the forward step}

For this calculation, temporarily put
\begin{align}
a_0=\frac u2,\qquad b_0=\frac{3e}{4}.
\end{align}
Both quantities are positive: $u>0$, and
\eqref{massbounds} gives
$e\ge(1-\lambda)/2>0$.
The right side of \eqref{eq:small_forward_gap} is
\begin{align}
(a_0+b_0)\theta+b_0\log(1-\theta).
\end{align}
Its first and second derivatives are
\begin{align}
a_0+b_0-\frac{b_0}{1-\theta},
\qquad
-\frac{b_0}{(1-\theta)^2},
\end{align}
respectively. It is therefore strictly concave
and attains its maximum at
\begin{align}
\theta=\frac{a_0}{a_0+b_0}\in(0,1).
\end{align}
Substitution into \eqref{eq:small_forward_gap}
gives
\begin{align}
\delta
\ge
a_0-b_0\log\left(1+\frac{a_0}{b_0}\right).
\end{align}
Applying \eqref{logineq} with $x=a_0/b_0$
and multiplying by $b_0$ yields
\begin{align}
\delta
\ge
\frac{a_0^2}{2b_0+(4/3)a_0}
=
\frac{(u/2)^2}{3e/2+2u/3}.
\end{align}
Rearranging, we obtain
\begin{align}
\frac{u^2}{\delta}\le6e+\frac83u.
\end{align}

To bound the remaining occurrence of $u$, note that
\begin{align}
kr^{k-1}\le\frac{k}{2^{k-1}}\le1
\qquad(k\ge1),
\end{align}
where the last inequality is equivalent to
$k\le2^{k-1}$. Hence
\begin{align}
u=B'(r)-A'(r)\le B'(r)
=\sum_{k=1}^n k\beta_kr^{k-1}
\le\sum_{k=1}^n\beta_k=o.
\end{align}
We have thus proved
\begin{align}\label{smallbound}
\frac{u^2}{\delta}
\le6e+\frac83u
\le6e+\frac83o.
\end{align}

\subsubsection{The range \texorpdfstring{$1/2\le\lambda<1$}
{one half at most lambda less than one}}

Since $r^k\le1/2$ for every $k\ge1$,
\begin{align}
d\le B(r)
=\sum_{k=1}^n\beta_kr^k
\le\frac o2
\le\frac{1+\lambda}{4}.
\label{eq:small_value_bound}
\end{align}
For $\lambda\ge1/2$, the upper bound
$(1+\lambda)/4$ is strictly less than $\lambda$,
so it lies in the domain of the spectral test.

Recall from \eqref{qpoly} that this test succeeds
when $\mathcal Q_\lambda(d)\ge0$, and that
$\mathcal Q_\lambda$ decreases with $d$.
It is therefore enough to evaluate it at the
upper bound in \eqref{eq:small_value_bound}.
Direct substitution gives
\begin{align}
\mathcal Q_\lambda\left(\frac{1+\lambda}{4}\right)
=
\frac{5\lambda^2+8\lambda-5}{4}.
\end{align}
The polynomial $5\lambda^2+8\lambda-5$ is
increasing for $\lambda\ge0$, and its value at
$\lambda=1/2$ is $1/4$. Thus it is positive
throughout the present range. Equivalently,
the numerator in \eqref{qpoly}, when its
denominator is written as
$(1+d)(\lambda-d)(1-\lambda d)$, equals
\begin{align}
\frac{(1-\lambda)(5\lambda^2+8\lambda-5)}{16}
>0
\end{align}
at this upper endpoint. The spectral estimate
therefore proves $\mathcal I\le M$ for
$1/2\le\lambda<1$.

\subsubsection{The range \texorpdfstring{$0<\lambda\le1/2$}
{zero less than lambda at most one half}}

The cutoff \eqref{c0} already proves the desired
bound when $d\le c_0\lambda$, where
$c_0=741/1000$. We may therefore assume
\begin{align}
d>c_0\lambda.
\end{align}
By \eqref{eq:small_value_bound}, $o\ge2d$,
and hence $e=1-o\le1-2d$. Using $e+o=1$,
we obtain
\begin{align}
6e+\frac83o
&=\frac83+\frac{10}{3}e\notag\\
&\le\frac83+\frac{10}{3}(1-2d)\notag\\
&=6-\frac{20}{3}d.
\end{align}
The defining identity for $\mathcal I$, together
with $\Phi_\lambda(d)=Q(\lambda,d)/\delta$
and \eqref{smallbound}, now gives
\begin{align}
\mathcal I
=
Q(\lambda,d)\frac{u^2}{\delta}
\le
Q(\lambda,d)\left(6-\frac{20}{3}d\right).
\end{align}
Both factors on the right are positive and
decrease with $d$ on $0\le d\le\lambda\le1/2$.
Indeed, the monotonicity of $Q$ was established
in Section~\ref{sec:coverage}, and
$6-20d/3\ge8/3>0$ on this interval.
Consequently,
\begin{align}
\mathcal I
\le
Q(\lambda,c_0\lambda)
\left(6-\frac{20}{3}c_0\lambda\right).
\label{eq:small_ceiling_bound}
\end{align}
We will show that the last expression is at
most one.

Using the definition of $Q$, the expression
in \eqref{eq:small_ceiling_bound} is
\begin{align}
\frac{\lambda(1-c_0\lambda)}
{(1+c_0\lambda)(1-c_0\lambda^2)}
\left(6-\frac{20}{3}c_0\lambda\right).
\end{align}
Its denominator is positive for
$0\le\lambda\le1/2$. Subtracting its numerator
from its denominator gives
\begin{align}
N(\lambda)
:={}&
(1+c_0\lambda)(1-c_0\lambda^2)\notag\\
&-\lambda(1-c_0\lambda)
\left(6-\frac{20}{3}c_0\lambda\right)\notag\\
={}&
1-\frac{5259}{1000}\lambda
+\frac{1729}{200}\lambda^2
-\frac{4209621}{1000000}\lambda^3.
\label{eq:small_cubic}
\end{align}
Thus it remains to prove $N(\lambda)\ge0$
on $[0,1/2]$.

For an explicit verification, write
\begin{align}
z=\lambda-\frac9{20},
\qquad -\frac9{20}\le z\le\frac1{20}.
\end{align}
Here $z$ is only a local scalar variable in this
polynomial calculation. Expanding
\eqref{eq:small_cubic} about $\lambda=9/20$ gives
\begin{align}
N(\lambda)
={}&
-\frac{4209621}{1000000}z^3
+\frac{59240233}{20000000}z^2\notag\\
&-\frac{14337903}{400000000}z
+\frac{3686291}{8000000000}.
\end{align}
To bound the cubic and quadratic terms together,
use $z\le1/20$:
\begin{align}
&-\frac{4209621}{1000000}z^3
+\frac{59240233}{20000000}z^2\notag\\
&=
z^2\left(
\frac{59240233}{20000000}
-\frac{4209621}{1000000}z
\right)\notag\\
&\ge
\frac{55030612}{20000000}z^2
\ge\frac{11}{4}z^2.
\end{align}
Also,
\begin{align}
\frac{14337903}{400000000}<\frac9{250},
\qquad
\frac{3686291}{8000000000}>\frac9{20000}.
\end{align}
Therefore
\begin{align}
N(\lambda)
\ge
\frac{11}{4}z^2-\frac9{250}|z|
+\frac9{20000}.
\end{align}
The expression on the right is a quadratic
in $|z|$ with positive leading coefficient and
negative discriminant:
\begin{align}
\left(\frac9{250}\right)^2
-4\cdot\frac{11}{4}\cdot\frac9{20000}
=
-\frac{1827}{500000}<0.
\end{align}
It is consequently positive for every real $z$.
This proves $N(\lambda)>0$ on $[0,1/2]$ and
hence $\mathcal I\le1$ in the remaining
small-ceiling range.

Finally, the degree-one term in the parity
benchmark gives
\begin{align}
M=M_n(r)\ge\frac{1}{1-r^2}\ge1.
\end{align}
Thus $\mathcal I\le M$ also for
$0<\lambda\le1/2$. Together with the preceding
case, this completes the proof for $u>0$,
$0<r\le1/2$, and every $0<\lambda<1$.

\subsection{\texorpdfstring{$u>0$, $nt \le \log 2$:
all $0<\lambda<1$}
{Positive slope: short logarithmic scale, all ceilings}}
\label{sec:short}
The condition $nt \le \log 2$ is equivalent to  $r^n\ge1/2$.
\begin{proposition}
[A one-step Taylor estimate]\label{lem:short_taylor_new}
Under the hypotheses and notation of Section~\ref{sec:coverage},
suppose $u=h'(r)>0$ and $x=r^n\ge1/2$. Then 
\begin{align}\label{eq:short_slope_new}
 \frac{u^2}{M_n(r)}\le 2o(\lambda-d).
\end{align}
Consequently,
\begin{align}\label{eq:short_bound_new}
 \frac{\mathcal I}{M_n(r)}
 \le 2oQ(\lambda,d)\le(1+\lambda)Q(\lambda,d).
\end{align}
\end{proposition}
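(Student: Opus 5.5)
The second assertion \eqref{eq:short_bound_new} is immediate once \eqref{eq:short_slope_new} is proved. By \eqref{eq:normalized_scalar_bound}, $\mathcal I/M=(u^2/M)\,Q(\lambda,d)/\delta$ with $\delta=\lambda-d$, and \eqref{massbounds} gives $2o\le1+\lambda$. So the plan concentrates on \eqref{eq:short_slope_new}. I would prove it by a one-step forward Taylor estimate for the gap function $F=\lambda-h\ge0$ along $[r,1]$. The hypothesis $x=r^n\ge1/2$ guarantees that every monomial of degree at most $n$ satisfies $r^k\ge1/2$, so each such monomial is within a factor two of its value at $v=1$ on the whole forward interval.

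\textbf{Step 1 (spectral monomial bound using $r^k\ge1/2$).} From \eqref{eq:degree_derivative_bound},
\begin{align}
(kr^{k-1})^2\le M(1-r^k)(1+r^k)\le 2M(1-r^k).
\end{align}
Applying Cauchy--Schwarz to $B'(r)=\sum_k\beta_k\,kr^{k-1}$, as in \eqref{oddcs}, gives
\begin{align}
B'(r)^2\le 2M\,o\,\bigl(o-B(r)\bigr).
\end{align}
Thus the mass $o$ enters exactly as required. It remains to replace $o-B(r)=B(1)-B(r)$ by a quantity that is controlled by $\delta$ after the negative part $A$ has been accounted for.

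\textbf{Step 2 (forward Taylor step).} I would write
\begin{align}
\delta\ge F(r)-F(v)=\bigl[B(v)-B(r)\bigr]-\bigl[A(v)-A(r)\bigr]
\end{align}
for a step point $v\in(r,1]$. The $B$-increment would be bounded below termwise by convexity. The $A$-increment would be bounded above by $(v-r)A'(r)$ plus a second-order remainder. That remainder is controlled through $A''\le$ (monomial estimate)$\times e$, where near $v=1$ the factor $(1-r^k)$ and the relation $k(v-r)\lesssim 1-r^k$ (valid because $kt\le nt\le\log2$) keep it comparable to first-order terms. This yields an inequality of the form $\delta\ge \ell u-K\ell^2/2$ along the step $\ell=v-r$. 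Optimizing $\ell=u/K$, exactly as in Step~3 of Section~\ref{sec:decreasing}, produces $u^2\le 2K\delta$. The aim is to arrange the bookkeeping so that $K\le Mo$. The natural way to do this is to merge Step~1 with Step~2: one bounds $u$ by $B'(r)-A'(r)$, and uses the same Cauchy--Schwarz weights $\beta_k(1-r^k)$ so that the curvature charged to $B$ is paid for by $o$. The convexity of $B$ means that its second-order terms only help.

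\textbf{Main obstacle.} The difficult point is making the constant come out as $Mo$ rather than $M\cdot(\text{something involving }e)$. A direct bound on $A''$ naturally produces $e$, as in \eqref{eq:negative_second_derivative}. To avoid this, I would first try choosing the step all the way to $v=1$. Then one would use $A(1)-A(r)\le(1-r)A'(1)\le 2(1-r)A'(r)$, since $r^{k-1}\ge1/2$, together with $B(1)-B(r)\ge(1-r)B'(r)$. The negative part then costs only a bounded multiple of $(1-r)A'(r)$, which can be absorbed against $u=B'(r)-A'(r)$. What remains is to compare $(1-r)$ with $(1-r^k)/(kr^{k-1})$ using $kt\le\log2$. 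If the factor two in $A'(1)\le2A'(r)$ proves too lossy, the fallback is the logarithmic path $v=r^{1-\theta}$ together with the scaled slope $U=rtu$ and benchmark $H$ of \eqref{lognormalization}. Along this path the monomials become exponentials $e^{-kt(1-\theta)}$ with $kt\le\log2$, and their second derivatives are directly comparable to $(kt)^2$.
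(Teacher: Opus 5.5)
Your derivation of \eqref{eq:short_bound_new} from \eqref{eq:short_slope_new} is correct. The proof of \eqref{eq:short_slope_new}, however, has a genuine gap, and it is the obstacle you flag yourself: nothing in the plan neutralizes the negative part $A$.

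In the variable $v$, and equally along the logarithmic path $v=r^{1-\theta}$, every monomial is convex. So the curvature of $F=\lambda-B+A$ can be bounded above only through $A''$, that is, through $e$. This is how $e$ enters in Section~\ref{sec:decreasing}, and as $13e/12$ in Appendix~\ref{sec:forward}. Your fallback therefore yields at best an inequality with $2o$ replaced by something involving $e$.

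The step to $v=1$ fails concretely. It gives only $\delta\ge h(1)-h(r)\ge(1-r)\bigl(B'(r)-2A'(r)\bigr)=(1-r)\bigl(u-A'(r)\bigr)$, and $A'(r)$ is not dominated by $u$. For example, take $n=2$, $h(v)=\frac34v-\frac14v^2$, $r=\frac{9}{10}$ and any $\lambda\in[\frac12,1)$. This satisfies all the hypotheses, with $\sum_k|a_k|=1$ and $r^2\ge\frac12$. Yet $u=\frac{3}{10}<A'(r)=\frac{9}{20}$, so your bound is vacuous. Since $A$ is convex in $v$, any step in $v$ must pay the nonnegative excess $A(v)-A(r)-(v-r)A'(r)$, and this cannot be ``absorbed'' against $u$. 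Your Step 1 has the same defect: $h(1)\le\lambda$ only gives $o-B(r)\le\delta+\bigl(e-A(r)\bigr)$, which again leaves a negative-part term.

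The missing idea is to take the Taylor step in the variable $z=v^n$ rather than in $v$ or $\log v$. Write $g(z)=h(z^{1/n})=\sum_k(\beta_k-\alpha_k)z^{k/n}$. Every exponent $k/n$ is at most one, so each $z^{k/n}$ is concave. Hence $-A(z^{1/n})$ becomes convex and can simply be discarded. For $z\in[1/2,1]$, the bounds $pz^{p-1}\le1$ and $p(1-p)z^{p-2}\le1$ give $b:=g'(x)\le o$ and $g''\ge-o$ on $[x,1]$, where $x=r^n$. Now take the step $s=(1-x)b/o\le1-x$. Together with $g\le\lambda$, this gives $\delta\ge bs-\frac o2s^2=(1-x^2)b^2/(2o)$.

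Finally, replace your Cauchy--Schwarz Step 1 by the degree-$n$ term of $M_n(r)$ alone. Since $b=u/(nr^{n-1})$, this gives $u^2/M\le(1-x^2)b^2$. The factors $1-x^2$ cancel, and \eqref{eq:short_slope_new} follows with exactly the constant $2o$.
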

\begin{proof}
Write $p_k=k/n\in(0,1]$, and set
\begin{align}
 g(z)=h(z^{1/n})=\sum_{k=1}^n(\beta_k-\alpha_k)z^{p_k},
 \qquad b=g'(x)=\frac{u}{nr^{n-1}}>0.
\end{align}
For $1/2\le z\le1$ and $0<p\le1$,
\begin{align}
 pz^{p-1}\le1,\qquad p(1-p)z^{p-2}\le1.\label{eq:bounds}
\end{align}
Indeed, $p\mapsto pz^{p-1}$ is increasing because
$1+p\log z\ge1-\log2>0$; the second inequality follows
from $p(1-p)\le1/4$ and $z^{p-2}\le z^{-2}\le4$.
Since the coefficients of $A$ and $B$ are nonnegative,
\begin{align}
 &b= \sum_{k=1}^n(\beta_k-\alpha_k)p_k x^{p_k-1} \le \sum_{k=1}^n \beta_k = o, \\
 &g''(z)= \sum_{k=1}^{n-1}(\beta_k-\alpha_k)p_k (p_k-1) z^{p_k-2} \ge - \sum_{k=1}^{n-1} \beta_k \ge - o.
\end{align}
Here the degree-$n$ term vanishes because $p_n=1$.
The bounds follow from \eqref{eq:bounds}, applied at $z=x$
and throughout $x\le z\le1$, respectively.
Hence,
\begin{align}
 0<b\le o,\qquad g''(z)\ge-o\quad(x\le z\le1).
\end{align}
Choose $s=(1-x)b/o\in(0,1-x]$. Taylor's theorem
and $g\le\lambda$ imply
\begin{align}
\delta&=\lambda-g(x)\notag\\
&\ge g(x+s)-g(x)\notag\\
& \ge bs-\frac{o}{2}s^2\notag\\
&=\frac{(1-x^2)b^2}{2o}.
\end{align}
On the other hand, the degree-$n$ term in $M_n(r)$ yields
\begin{align}
\frac{u^2}{M_n(r)}
\le(1-r^{2n})\left(\frac{u}{nr^{n-1}}\right)^2
=(1-x^2)b^2.
\end{align}
Combining these two inequalities gives \eqref{eq:short_slope_new}.
Now multiply by $\Phi_\lambda(d)=Q(\lambda,d)/\delta$
and use $2o\le1+\lambda$ to obtain \eqref{eq:short_bound_new}.
\end{proof}

If $\lambda\le1/2$, \eqref{eq:short_bound_new} and
$Q(\lambda,d)\le\lambda$ give
\begin{align}
 \frac{\mathcal I}{M}\le\lambda(1+\lambda)
 \le\frac34<1.
\end{align}
For $\lambda\ge1/2$, the spectral cutoff
\eqref{twolambda} already covers $d\le2\lambda-1$.
If $d>2\lambda-1$, monotonicity of $Q(\lambda,d)$
in $d$ yields
\begin{align}
 \frac{\mathcal I}{M}
 \le(1+\lambda)Q(\lambda,d)
 \le(1+\lambda)Q(\lambda,2\lambda-1)
 =\frac{1+\lambda}{1+2\lambda}\le\frac34.
\end{align}
Therefore  $\mathcal I\le M$  holds whenever
$u>0$ and $nt \le \log 2$.

\subsection{\texorpdfstring{$u>0$, $r\ge 1/2$, $nt > \log 2$:
$0<\lambda\le1/5$ or $4/5\le\lambda<1$}
{Positive slope: long logarithmic scale, outer ceilings}}
\label{sec:long}

Assume $u>0$, $r\ge1/2$, and $nt > \log 2$, where
$t=-\log r$. We prove the desired bound in the two
outer ranges
\begin{align}
0<\lambda\le\frac15
\qquad\text{and}\qquad
\frac45\le\lambda<1.
\end{align}
The remaining range of $\lambda$ is treated in
Section~\ref{sec:central}.

We use the differential estimate from
Appendix~\ref{sec:forward}. Recall that
\begin{align}
U=rtu,\qquad
H=Mr^2t^2
=\max_{1\le k\le n}
\frac{(kt)^2}{\exp(2kt)-1}.
\end{align}
Whenever $mH\ge1$, define
\begin{align}
Z=\frac{U}{\sqrt{mH}},\qquad
A_*=\frac{13e}{12}+\frac{4Z}{3}.
\end{align}
Then \eqref{Zconditions} gives
\begin{align}
d+Z\le o,\qquad d^2+mZ^2\le o^2,
\end{align}
and the estimate \eqref{coarse} gives
\begin{align}\label{eq:outer_coarse}
\frac{\mathcal I}{M}\le mQ(\lambda,d)A_*.
\end{align}
Our approach is to choose $m$ independently of $n$
and $r$, use the two constraints on $d$ and $Z$
to bound $A_*$ in terms of $d$, and then apply
the spectral cutoffs from Section~\ref{sec:coverage}.

\subsubsection{Lower bounds on the normalized parity envelope}

Define
\begin{align}
\psi(x)=\frac{x^2}{\exp(2x)-1},\qquad x>0.
\end{align}
We first show that $\psi$ increases and then
decreases. Differentiation gives
\begin{align}
\psi'(x)
=
\frac{2x\bigl[\exp(2x)(1-x)-1\bigr]}
     {(\exp(2x)-1)^2}.
\end{align}
The sign of this derivative is therefore the
sign of
\begin{align}
\exp(2x)(1-x)-1.
\end{align}
This expression is zero at $x=0$, has derivative
$\exp(2x)(1-2x)$, and equals $-1$ at $x=1$.
It increases on $(0,1/2)$ and decreases on
$(1/2,\infty)$, crossing zero exactly once
after $1/2$. Hence $\psi$ is unimodal, and its
minimum on any compact subinterval of $(0,\infty)$
occurs at an endpoint.

Since $nt>\log 2$ and $r\ge1/2$, we have
\[
nt>\frac{\log 3}{2},
\qquad
0<t\le\log 2.
\]
Put
\begin{equation}
a_0=\frac{\log 3}{2},
\qquad
k=\left\lceil\frac{a_0}{t}\right\rceil.
\end{equation}
The inequality $nt>a_0$ ensures that $1\le k\le n$.
We claim that $a_0\le kt\le2a_0$. If $t\ge a_0$,
then $k=1$ and
\[
a_0\le kt=t\le\log 2<\log 3=2a_0.
\]
If $t<a_0$, the definition of $k$ gives
\[
a_0\le kt<a_0+t<2a_0.
\]
Thus an available degree has logarithmic scale in
$[a_0,2a_0]$.

By the definition of $H$ and the unimodality of $\psi$,
\begin{align}
H\ge\psi(kt)
\ge\min\{\psi(a_0),\psi(2a_0)\}.
\end{align}
The endpoint values coincide:
\begin{align}
\psi(a_0)
=\frac{(\log 3)^2}{8}
=\psi(2a_0).
\end{align}
Consequently,
\begin{align}\label{Hbetter}
H\ge\frac{(\log 3)^2}{8}>\frac3{20}.
\end{align}
To verify the strict inequality using rational bounds,
the expansion
\begin{align}
\log\frac{1+x}{1-x}
=
2\sum_{j=0}^{\infty}\frac{x^{2j+1}}{2j+1},
\qquad |x|<1,
\end{align}
at $x=1/2$ gives
\begin{align}
\log 3>1+\frac1{12}+\frac1{80}
=\frac{263}{240}.
\end{align}
Since
\begin{align}
\left(\frac{263}{240}\right)^2-\frac65
=\frac{49}{57600}>0,
\end{align}
we obtain $(\log 3)^2/8>3/20$.

This single lower bound justifies both normalizations
used below. For $m=20/3$,
\begin{align}
mH>\frac{20}{3}\cdot\frac3{20}=1,
\end{align}
whereas for $m=55/8$,
\begin{align}
mH>\frac{55}{8}\cdot\frac3{20}
=\frac{33}{32}>1.
\end{align}
Thus the rescaled constraints in \eqref{Zconditions}
apply with either choice of $m$ throughout the
present regime.

\subsubsection{Two bounds using the coefficient masses}

Fix one of the choices of $m$ justified above.
Since $e+o=1$, the constraints in
\eqref{Zconditions} imply
\begin{align}
e\le1-d-Z,\qquad
e\le1-\sqrt{d^2+mZ^2}.
\end{align}
The first of these gives
\begin{align}
A_*
&=\frac{13e}{12}+\frac{4Z}{3}\notag\\
&\le\frac{13}{12}(1-d-Z)+\frac{4Z}{3}\notag\\
&=\frac{13}{12}(1-d)+\frac Z4.
\label{eq:outer_affine_increasing}
\end{align}

For a second bound, choose constants $a,b$ with
\begin{align}
a^2+\frac{b^2}{m}\le1.
\end{align}
Cauchy--Schwarz gives
\begin{align}
ad+bZ
&=
(a,b/\sqrt m)\cdot(d,\sqrt m Z)\notag\\
&\le
\sqrt{a^2+b^2/m}\sqrt{d^2+mZ^2}\notag\\
&\le o.
\end{align}
It follows that $e\le1-ad-bZ$, and hence
\begin{align}\label{eq:outer_affine_decreasing}
A_*
\le
\frac{13}{12}(1-ad)
+\left(\frac43-\frac{13b}{12}\right)Z.
\end{align}

We use the following two choices:
\begin{align}
\begin{array}{c|c|c|c}
m&a&b&a^2+b^2/m\\ \hline
20/3&3/5&2&24/25\\
55/8&3/4&5/3&1531/1584
\end{array}
\end{align}
Both squared norms are less than one, so both
choices are admissible. In each case,
\eqref{eq:outer_affine_increasing} increases
with $Z$, whereas
\eqref{eq:outer_affine_decreasing} decreases
with $Z$. Their minimum is therefore bounded
above by their common value at the intersection.

For $m=20/3$, the two bounds are
\begin{align}
A_*\le
\min\left\{
\frac{13}{12}(1-d)+\frac Z4,\,
\frac{13}{12}\left(1-\frac{3d}{5}\right)
-\frac{5Z}{6}
\right\}.
\end{align}
They intersect at $Z=2d/5$, giving
\begin{align}\label{eq:outer_mass_small}
A_*\le
\frac{13}{12}(1-d)+\frac d{10}
=\frac{65-59d}{60}.
\end{align}
For $m=55/8$, the bounds are
\begin{align}
A_*\le
\min\left\{
\frac{13}{12}(1-d)+\frac Z4,\,
\frac{13}{12}\left(1-\frac{3d}{4}\right)
-\frac{17Z}{36}
\right\}.
\end{align}
Their intersection is $Z=3d/8$, which gives
\begin{align}\label{eq:outer_mass_large}
A_*\le
\frac{13}{12}(1-d)+\frac{3d}{32}
=\frac{104-95d}{96}.
\end{align}
These bounds hold whether or not the intersection
point itself satisfies the remaining constraints
on $Z$.

\subsubsection{Small ceilings}



Suppose $0<\lambda\le1/5$. Choose $m=20/3$,
which is admissible by \eqref{Hbetter}.
Combining \eqref{eq:outer_coarse} with
\eqref{eq:outer_mass_small}, we obtain
\begin{align}
\frac{\mathcal I}{M}
\le
\frac{20}{3}Q(\lambda,d)\frac{65-59d}{60}.
\end{align}
The spectral cutoff \eqref{c0} already settles
$d\le c_0\lambda$. Since $c_0>2/3$, we may
therefore assume
\begin{align}
d\ge d_0:=\frac{2\lambda}{3}.
\end{align}
Both $Q(\lambda,d)$ and $65-59d$ are positive
and decreasing in $d$ on the relevant interval.
Thus
\begin{align}
\frac{\mathcal I}{M}
\le
\frac{20}{3}Q(\lambda,d_0)
\frac{65-59d_0}{60}.
\end{align}
After multiplication by the positive denominator
$60(1+d_0)(1-\lambda d_0)$, the assertion that
this last expression is at most one becomes
\begin{align}
&60(1+d_0)(1-\lambda d_0)\notag\\
&\qquad
-\frac{20}{3}\lambda(1-d_0)(65-59d_0)
\ge0.
\end{align}
Substituting $d_0=2\lambda/3$, the left side is
\begin{align}
\frac{20}{27}
\left(81-531\lambda+690\lambda^2-272\lambda^3\right).
\end{align}
The derivative of the cubic in parentheses is
\begin{align}
-531+1380\lambda-816\lambda^2
\le-531+\frac{1380}{5}
=-255
\end{align}
on $[0,1/5]$. The cubic is therefore decreasing
there, and its minimum is its value at $1/5$:
\begin{align}
81-\frac{531}{5}+\frac{690}{25}
-\frac{272}{125}
=\frac{28}{125}>0.
\end{align}
This proves $\mathcal I\le M$ throughout
$0<\lambda\le1/5$.

\subsubsection{Large ceilings}

Suppose $4/5\le\lambda<1$. Choose $m=55/8$, which is also admissible by
\eqref{Hbetter}.
Using \eqref{eq:outer_mass_large} in
\eqref{eq:outer_coarse} gives
\begin{align}
\frac{\mathcal I}{M}
\le
\frac{55}{8}Q(\lambda,d)\frac{104-95d}{96}.
\end{align}
The spectral cutoff \eqref{dstar} already
settles $d\le d_*(\lambda)$, so we may assume
$d>d_*(\lambda)$. Moreover,
\begin{align}
d_*(\lambda)-\frac{5\lambda}{6}
=
\frac{\lambda}{12}(3\lambda^2-\lambda-1)>0.
\end{align}
Indeed, $3\lambda^2-\lambda-1$ is increasing
for $\lambda\ge4/5$, and its value at $4/5$
is $3/25$. We may therefore use the simpler
lower bound
\begin{align}
d\ge d_0:=\frac{5\lambda}{6}.
\end{align}

The factors $Q(\lambda,d)$ and $104-95d$ are
positive and decreasing for $d<\lambda<1$.
It follows that
\begin{align}
\frac{\mathcal I}{M}
\le
\frac{55}{8}Q(\lambda,d_0)
\frac{104-95d_0}{96}.
\end{align}
As in the small-ceiling case, clearing the
positive denominator reduces the desired
bound to
\begin{align}
P:={}&96(1+d_0)(1-\lambda d_0)\notag\\
&-\frac{55}{8}\lambda(1-d_0)(104-95d_0)
\ge0.
\end{align}

Put $y=5\lambda-4$, so $0\le y<1$.
Substituting $\lambda=(y+4)/5$ and
$d_0=5\lambda/6$ and expanding gives
\begin{align}
1440P
=
160+17952y-10854y^2-5993y^3.
\end{align}
Since $y^2\le y$ and $y^3\le y$ for
$0\le y\le1$,
\begin{align}
1440P
&\ge
160+(17952-10854-5993)y\notag\\
&=160+1105y>0.
\end{align}
This proves $\mathcal I\le M$ for
$4/5\le\lambda<1$ and completes the treatment
of both outer ranges.

\subsection{\texorpdfstring{$u>0$, $r\ge1/2$, $nt > \log 2$:
$1/5\le\lambda\le4/5$}
{Positive slope: long logarithmic scale, central ceilings}}
\label{sec:central}

Assume $u>0$, $r\ge1/2$, $nt > \log 2$, and
$1/5\le\lambda\le4/5$. The normalization established
in Section~\ref{sec:long} gives $mH\ge1$ for
$m=55/8$. Consequently,
\begin{align}
Z=\frac{U}{\sqrt{mH}}
\end{align}
satisfies \eqref{Zconditions}, including
\begin{align}
d+Z\le o,\qquad
d^2+mZ^2\le o^2,\qquad
\frac{\mathcal I}{M}=\frac{mQZ^2}{\delta},
\end{align}
where $Q=Q(\lambda,d)$ and $\delta=\lambda-d$.

To finish this regime, it therefore suffices to prove
$mQZ^2\le\delta$. Unlike the estimates for the outer
ranges of $\lambda$, the argument below retains the
negative term in the denominator of $R$ from
Appendix~\ref{sec:forward}. This term provides the
additional control needed in the central range.

\begin{proposition}\label{central}
Retain the polynomial assumptions and coefficient
masses of Section~\ref{sec:coverage}, together with
the differential estimate of
Appendix~\ref{sec:forward}. Let $m=55/8$, and suppose
\begin{align}
\frac15\le\lambda\le\frac45,\qquad
\frac{37\lambda}{50}\le d<\lambda.
\end{align}
If $0<Z\le U$ satisfies
\begin{align}
d+Z\le o,\qquad d^2+mZ^2\le o^2,
\end{align}
then
\begin{align}
mQ(\lambda,d)Z^2\le\lambda-d.
\end{align}
\end{proposition}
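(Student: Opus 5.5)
The plan is to obtain $mQZ^2\le\delta$ from the sharp form of the forward Taylor estimate in Appendix~\ref{sec:forward}, and not from the coarse bound \eqref{coarse}. The coarse bound gives only $mZ^2/\delta\le A_*=\tfrac{13e}{12}+\tfrac{4Z}{3}$, and the computation of Section~\ref{sec:long} shows that this is too weak in the central window. I would therefore go back to \eqref{Rlemma} and keep its denominator $R$ in full. That denominator contains the same positive terms $\tfrac{13e}{12}+\tfrac{4Z}{3}$, together with a negative correction produced by the curvature of $B$ along the path $v=r^{1-\theta}$. The lower bound on $\delta$ then takes the form $\delta\ge mZ^2/R(e,Z,d)$ with $R<A_*$. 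The goal reduces to the purely algebraic inequality
\begin{align*}
Q(\lambda,d)\,R(e,Z,d)\le1 .
\end{align*}
The hypothesis $0<Z\le U$ is used only to justify substituting $Z$ for the scaled slope in the monotone parts of $R$.

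Next, I would remove the mass variables. Since $e=1-o$, the two constraints give
\begin{align*}
e\le1-\max\bigl\{d+Z,\sqrt{d^2+mZ^2}\bigr\}.
\end{align*}
I expect $R$ to be nondecreasing in $e$ when $Z$ and $d$ are fixed, so $e$ can be replaced by this upper bound. As in Section~\ref{sec:long}, the max of the affine constraint and the Cauchy--Schwarz linearization $ad+bZ\le o$ produces a bound on $R$ that is increasing in $Z$ on one side and decreasing on the other. The worst case then sits at their crossing, which for $m=55/8$ is $Z=\kappa d$ with $\kappa$ near $3/8$. Here the negative term in $R$ is evaluated at that crossing value of $Z$, and not discarded. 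This leaves a single explicit rational function $\mathcal R(\lambda,d)$ with $Q(\lambda,d)\,\mathcal R(\lambda,d)\le1$ as the target.

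Finally, I would verify $Q\mathcal R\le1$ on the region $1/5\le\lambda\le4/5$, $37\lambda/50\le d<\lambda$. The cutoff $37/50<c_0$ is consistent with \eqref{c0}, which already settles smaller $d$. I would clear the positive denominators $(1+d)(1-\lambda d)$ and the denominator of $\mathcal R$ to get a polynomial inequality $P(\lambda,d)\ge0$. The key observation is that $Q$ decreases in $d$, and $\mathcal R$ should also decrease in $d$, because larger $d$ forces smaller $e$. The worst case is then the lower edge $d=37\lambda/50$, where $P$ becomes a one-variable polynomial in $\lambda$ on $[1/5,4/5]$. Its positivity would be certified as in the earlier sections: shift to the midpoint, bound the higher-order terms by crude sign estimates, and show that the remaining quadratic has negative discriminant, splitting $[1/5,4/5]$ into two or three subintervals if necessary.

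The main obstacle is the monotonicity-in-$d$ step together with the tightness of the final polynomial. The margin is thin in the center of the window, since the coarse bound fails there. If $\mathcal R$ is not monotone in $d$, I would fall back on treating $P(\lambda,d)$ as a bivariate polynomial on the trapezoid. I would write $d=\lambda-\delta$ with $0<\delta\le13\lambda/50$ and show that each coefficient of $P$ in powers of $\delta$ has the right sign, or is dominated, for $\lambda\in[1/5,4/5]$.
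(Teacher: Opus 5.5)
There is a genuine gap. Your reduction decouples the slope from the gap $\delta$, and the target it produces is false.

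From $\delta\ge R(Z)$ you get $mQZ^2/\delta\le mQ\,D(Z)$, where $D(Z)=\tfrac{13e}{12}+\tfrac{4Z}{3}-Z^2/(o-4Z/5)$. (Note that \eqref{Rlemma} gives $\delta\ge Z^2/D(Z)$, not $\delta\ge mZ^2/D$. The factor $m$ therefore stays in the target, which is $mQ\,D(Z)\le1$, not $QD\le1$.) You then ask for this uniformly in $Z$, with the worst case at the crossing.

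But $Z\downarrow0$ with $o=d+Z$ is admissible and gives $D(Z)\to\tfrac{13}{12}(1-d)$. So any bound $\mathcal R(\lambda,d)$ valid for all admissible $(Z,e)$ is at least $\tfrac{13}{12}(1-d)$, however you treat the negative term or the linearization. Take $\lambda=\tfrac12$ and $d=\tfrac{37}{100}=\tfrac{37\lambda}{50}$. Then $Q=\tfrac{6300}{22331}\approx0.282$, $mQ\approx1.94$, and $mQ\cdot\tfrac{13}{12}(1-d)\approx1.32>1$. Your lower-edge polynomial $P$ is negative there, so the method cannot close.

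Yet the proposition is trivial for such small $Z$, since $mQZ^2\to0<\delta$. The bound $\delta\ge R(Z)$ is simply too weak when $Z$ is small compared with $\sqrt{\delta}$.

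The missing idea is to use the monotonicity of $R$ a second time, so that $D$ is evaluated only at the threshold slope determined by $\delta$. You use monotonicity only to pass from $U$ to $Z$. Instead, suppose $mQZ^2>\delta$ and put $z=\sqrt{\delta/(mQ)}<Z$. Then $\delta\ge R(Z)>R(z)=z^2/D(z)$, that is, $mQ\,D(z)>1$. The constraints $o\ge d+z$ and $o\ge\sqrt{d^2+mz^2}$ still hold because $z<Z$.

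An affine bound $D\le\tfrac{13}{12}-ad-bz$, as in \eqref{planes}, now gives $mQ\,D(z)\le mQ(\tfrac{13}{12}-ad)-b\sqrt{mQ\delta}$. The decreasing dependence on the slope has become a favorable term that grows with $\delta$, exactly where $Q$ grows. At the point above, the second bound in \eqref{planes} yields $mQ\,D(z)\le1.623-0.628\approx0.995<1$.

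The margin is this thin. That is why the paper splits the range of $y=1-d/\lambda$ into two intervals, proves concavity of $G_{a,b}$ in $y$, and certifies the endpoint cubics. A worst-case-over-$Z$ argument cannot close.
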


\begin{proof}
The proof has three parts. We first assume that the
claimed inequality fails and derive a necessary
inequality for the denominator of $R$. We then
bound that denominator by two affine expressions.
Finally, we show that at least one of these
expressions contradicts the necessary inequality.

\paragraph{Step 1: reduce a putative violation to a denominator bound}

Suppose, for a contradiction, that
$mQZ^2>\delta$. Define
\begin{align}
z=\sqrt{\frac{\delta}{mQ}},
\end{align}
so that $0<z<Z\le U$. Here $z$ is a local
scaled-slope variable, not the joint output
probability used in the Boolean reduction.

By Proposition~\ref{riccati},
$U\le o/\mathrm e$, the denominator of $R$
is positive on $[0,o/\mathrm e]$, and $R$
is strictly increasing there. Hence
\begin{align}
\delta\ge R(U)>R(z).
\end{align}
Writing
\begin{align}
D=\frac{13e}{12}+\frac{4z}{3}
-\frac{z^2}{o-4z/5},
\end{align}
we therefore have $D>0$ and
\begin{align}\label{contradiction}
\delta>\frac{z^2}{D},
\qquad\text{so}\qquad
mQD>1.
\end{align}
The last implication uses $z^2=\delta/(mQ)$
and $\delta>0$.

Since $z<Z$, the constraints on $Z$ also imply
\begin{align}
o\ge d+z,\qquad
o\ge\sqrt{d^2+mz^2}.
\label{eq:central_z_constraints}
\end{align}
We will use these inequalities, together with
$e=1-o$, to bound $D$ from above.

\paragraph{Step 2: two affine upper bounds on the denominator}

We claim that
\begin{align}\label{planes}
D\le
\min\left\{
\frac{13}{12}-d-\frac{3z}{10},\,
\frac{13}{12}-\frac{2d}{3}-\frac{5z}{4}
\right\}.
\end{align}
For this calculation, put
\begin{align}
C=o-\frac{4z}{5}>0.
\end{align}
Its positivity follows, for example, from
$z\le U\le o/\mathrm e<o$.

For the first bound, write
\begin{align}
D
&=\frac{13}{12}-\frac{13o}{12}
+\frac{4z}{3}-\frac{z^2}{C}\notag\\
&=\frac{13}{12}-o+\frac{4z}{3}
-\left(\frac{o}{12}+\frac{z^2}{C}\right)\notag\\
&\le
\frac{13}{12}-d+\frac z3
-\left(\frac{o}{12}+\frac{z^2}{C}\right),
\end{align}
where the last line uses $o\ge d+z$.
Since $o=C+4z/5$,
\begin{align}
\frac{o}{12}+\frac{z^2}{C}
=
\frac{C}{12}+\frac z{15}+\frac{z^2}{C}.
\end{align}
The arithmetic--geometric mean inequality gives
\begin{align}
\frac{C}{12}+\frac{z^2}{C}
\ge2\sqrt{\frac{z^2}{12}}
=\frac z{\sqrt3}.
\end{align}
Also $1/\sqrt3\ge17/30$, as follows by squaring.
Therefore
\begin{align}
\frac{o}{12}+\frac{z^2}{C}
\ge
\left(\frac{17}{30}+\frac1{15}\right)z
=\frac{19z}{30}.
\end{align}
Substitution yields
\begin{align}
D\le\frac{13}{12}-d-\frac{3z}{10}.
\end{align}

For the second bound, use
$(z-C/3)^2\ge0$, which implies
\begin{align}
\frac{z^2}{C}\ge\frac{2z}{3}-\frac C9.
\end{align}
Consequently,
\begin{align}
D
&\le
\frac{13}{12}-\frac{13o}{12}
+\frac{4z}{3}-\frac{2z}{3}+\frac C9\notag\\
&=
\frac{13}{12}-\frac{35o}{36}+\frac{26z}{45}.
\label{eq:central_second_plane_preliminary}
\end{align}
We next obtain a linear lower bound on $o$
from the second constraint in
\eqref{eq:central_z_constraints}. Cauchy--Schwarz gives
\begin{align}
\frac{24d}{35}+\frac{329z}{175}
&=
\left(\frac{24}{35},\frac{329}{175\sqrt m}\right)
\cdot(d,\sqrt m z)\notag\\
&\le
\left[
\left(\frac{24}{35}\right)^2
+\frac1m\left(\frac{329}{175}\right)^2
\right]^{1/2}
\sqrt{d^2+mz^2}\notag\\
&\le o,
\end{align}
because
\begin{align}
\left(\frac{24}{35}\right)^2
+\frac1m\left(\frac{329}{175}\right)^2
=
\frac{1657928}{1684375}<1.
\end{align}
The coefficient of $o$ in
\eqref{eq:central_second_plane_preliminary}
is negative, so this lower bound on $o$ gives
an upper bound on $D$. Using
\begin{align}
\frac{35}{36}\frac{24}{35}=\frac23,\qquad
-\frac{35}{36}\frac{329}{175}
+\frac{26}{45}=-\frac54,
\end{align}
we obtain
\begin{align}
D\le\frac{13}{12}-\frac{2d}{3}-\frac{5z}{4}.
\end{align}
This proves \eqref{planes}.

\paragraph{Step 3: formulate the remaining scalar inequalities}

Write
\begin{align}
y=1-\frac d\lambda,\qquad
d=\lambda(1-y),\qquad
\delta=\lambda y.
\end{align}
The assumed lower bound on $d$ gives
$0<y\le13/50$. Put $\kappa=13/12$.
For either of the coefficient pairs
\begin{align}
(a,b)=(1,3/10)
\qquad\text{or}\qquad
(a,b)=(2/3,5/4),
\end{align}
define
\begin{align}\label{eq:central_G}
G_{a,b}(\lambda,y)
=
1-mQ(\lambda,d)(\kappa-ad)
+b\sqrt{mQ(\lambda,d)(\lambda-d)},
\qquad d=\lambda(1-y).
\end{align}
The letters $a,b$ in this calculation are
coefficients of the affine bounds in
\eqref{planes}, not the marginal endpoint
parameters used earlier.

To see why these functions are useful, recall that
$z=\sqrt{\delta/(mQ)}$. Each corresponding bound
in \eqref{planes} gives
\begin{align}
mQD
&\le mQ(\kappa-ad-bz)\notag\\
&=mQ(\kappa-ad)-b\sqrt{mQ\delta}\notag\\
&=1-G_{a,b}(\lambda,y).
\end{align}
Thus $G_{a,b}(\lambda,y)\ge0$ would imply
$mQD\le1$, contradicting \eqref{contradiction}.

We will establish
\begin{align}
&G_{1,3/10}(\lambda,y)\ge0,
&&0\le y\le1/16,
\label{eq:central_G_first}\\
&G_{2/3,5/4}(\lambda,y)\ge0,
&&1/16\le y\le13/50,
\label{eq:central_G_second}
\end{align}
uniformly for $1/5\le\lambda\le4/5$.
The value at $y=0$ is included by continuity.
These two intervals cover every possible
relative gap in the proposition.

\subsubsection*{Concavity in the relative gap}

Fix $\lambda\in[1/5,4/5]$. We show that each
function in \eqref{eq:central_G_first}--%
\eqref{eq:central_G_second} is concave in $y$
on its stated interval. Since
$d=\lambda(1-y)$ is affine in $y$, it suffices
to establish the corresponding concavity in $d$.

There are two nonconstant terms in
\eqref{eq:central_G}. We prove that
$\sqrt{(\lambda-d)Q(\lambda,d)}$ is concave,
and that $Q(\lambda,d)(\kappa-ad)$ is convex.
The signs in \eqref{eq:central_G} then give
the required concavity of $G$.

\paragraph{The square-root term}

For this calculation, put
\begin{align}
v(d)=(\lambda-d)Q(\lambda,d)
=
\frac{\lambda(\lambda-d)(1-d)}
{(1+d)(1-\lambda d)}.
\end{align}
The symbol $v(d)$ here denotes an auxiliary
function, rather than a correlation parameter.
For $d<\lambda$, this function is positive and
\begin{align}
(\sqrt v)''
=
\frac{2vv''-(v')^2}{4v^{3/2}}.
\end{align}
We therefore need to show
$2vv''-(v')^2\le0$.

Introduce the algebraic variable
\begin{align}
\eta=\frac{\lambda-d}{1-\lambda d},
\qquad
\lambda=\frac{d+\eta}{1+d\eta}.
\end{align}
This $\eta$ is local to the present calculation
and is distinct from the second endpoint in
the Boolean normalization. Differentiating
$v(d)$ with $\lambda$ fixed and then making
the displayed substitution gives
\begin{align}
2vv''-(v')^2
=
\frac{(d+\eta)^2}
{(1+d)^4(1+d\eta)^2}\,J(d,\eta),
\end{align}
where
\begin{align}
J(d,\eta)
=
-1+4\eta(1-d)+(2-8d)\eta^2
-4(1-d)\eta^3+3\eta^4.
\end{align}
In this identity, the substitution is made
after differentiation; $\eta$ is not held
fixed when differentiating $v$.

We first bound the last two terms of $J$.
The inequality $\lambda\le4/5$ implies
$\eta\le1-d$. Indeed,
\begin{align}
\eta\le1-d
&\quad\Longleftrightarrow\quad
\lambda-d\le(1-d)(1-\lambda d)\notag\\
&\quad\Longleftrightarrow\quad
\lambda(1+d-d^2)\le1,
\end{align}
and the last inequality follows from
$1+d-d^2\le5/4$. Consequently,
\begin{align}
-4(1-d)\eta^3+3\eta^4
=
\eta^3[3\eta-4(1-d)]
\le-\eta^3(1-d)\le0.
\end{align}

If $d\ge1/4$, the coefficient $2-8d$ is
nonpositive. Also,
\begin{align}
\eta(1-d)
=
(\lambda-d)\frac{1-d}{1-\lambda d}
\le\lambda-d,
\end{align}
because $\lambda\le1$. It follows that
\begin{align}
J(d,\eta)
&\le-1+4\eta(1-d)\notag\\
&\le-1+4(\lambda-d)\notag\\
&\le-1+4\cdot\frac{13}{50}\cdot\frac45
=-\frac{21}{125}<0.
\end{align}

If $d\le1/4$, the assumption
$d\ge37\lambda/50$ implies
\begin{align}
\lambda\le\frac{50d}{37}\le\frac{25}{74}.
\end{align}
Hence
\begin{align}
\lambda-d\le\frac{13\lambda}{50}
\le\frac{13}{148},
\qquad
\lambda d\le\frac{25}{296}.
\end{align}
Therefore
\begin{align}
\eta
\le
\frac{13/148}{1-25/296}
=\frac{26}{271}<\frac18.
\end{align}
Dropping the nonpositive cubic and quartic
combination and using $1-d\le1$ and
$2-8d\le2$, we obtain
\begin{align}
J(d,\eta)
\le-1+4\eta+2\eta^2
<
-1+\frac12+\frac1{32}
=-\frac{15}{32}<0.
\end{align}
Thus $\sqrt{v(d)}$ is concave throughout
$37\lambda/50\le d<\lambda$. Since it extends
continuously to zero at $d=\lambda$, concavity
also holds on the corresponding closed interval.

\paragraph{The rational term}

The partial-fraction identity
\begin{align}
Q(\lambda,d)(\kappa-ad)
={}&-a
+\frac{2\lambda(a+\kappa)}
{(1+\lambda)(1+d)}\notag\\
&+\frac{(a-\kappa\lambda)(1-\lambda)}
{(1+\lambda)(1-\lambda d)}
\label{eq:central_partial_fractions}
\end{align}
makes the second derivative easy to analyze.
One way to verify this identity is to multiply
both sides by $(1+d)(1-\lambda d)$.
Comparison of the quadratic coefficients gives
the constant term $-a$; evaluation of the resulting
polynomial identity at $d=-1$ and $d=1/\lambda$
determines the other two coefficients.

Both $1/(1+d)$ and $1/(1-\lambda d)$ have positive
second derivatives on the present domain.
For $a=1$, both coefficients in
\eqref{eq:central_partial_fractions} are
nonnegative, because
\begin{align}
1-\kappa\lambda
\ge1-\frac{13}{12}\frac45
=\frac2{15}>0.
\end{align}
Thus the rational term is convex for $a=1$.

For $a=2/3$, the same argument applies when
$\lambda\le a/\kappa=8/13$.
Suppose now that $8/13\le\lambda\le4/5$.
The second coefficient can be negative, so
we compare the two contributions to the
second derivative explicitly. From
\eqref{eq:central_partial_fractions}, convexity
is equivalent to
\begin{align}
(\kappa\lambda-a)(1-\lambda)\lambda
\left(\frac{1+d}{1-\lambda d}\right)^3
\le2(a+\kappa).
\label{eq:central_rational_convexity}
\end{align}
For this choice of $a$, we only need the range
$y\ge1/16$. Therefore
\begin{align}
d=\lambda(1-y)
\le\frac45\frac{15}{16}=\frac34,
\end{align}
and
\begin{align}
\frac{1+d}{1-\lambda d}
\le
\frac{7/4}{1-(4/5)(3/4)}
=\frac{35}{8}.
\end{align}

The remaining factor satisfies
\begin{align}
(\kappa\lambda-a)(1-\lambda)\lambda
\le\frac4{125}
\qquad(8/13\le\lambda\le4/5).
\end{align}
To verify this, its derivative is
\begin{align}
\frac{-39\lambda^2+42\lambda-8}{12}.
\end{align}
The numerator is a concave quadratic, positive
at both endpoints of the interval, and hence
positive throughout. The factor is therefore
increasing, and its maximum is its value
$4/125$ at $\lambda=4/5$.

Combining the two estimates gives
\begin{align}
(\kappa\lambda-a)(1-\lambda)\lambda
\left(\frac{1+d}{1-\lambda d}\right)^3
\le
\frac4{125}\left(\frac{35}{8}\right)^3
=\frac{343}{128}
<\frac72=2(a+\kappa).
\end{align}
This proves \eqref{eq:central_rational_convexity}
and hence the required convexity for $a=2/3$.

The square-root term in $G$ is therefore concave,
and the rational term multiplied by $-m$ is
also concave. Since $d$ is affine in $y$, both
functions $G$ are concave in $y$ on the intervals
specified in \eqref{eq:central_G_first} and
\eqref{eq:central_G_second}.

\subsubsection*{Reduction to four endpoint values}

A concave function on a closed interval lies
above the line segment joining its endpoint
values. It is therefore enough to prove
\begin{align}
G_{1,3/10}(\lambda,0)\ge0,\qquad
G_{1,3/10}(\lambda,1/16)\ge0,
\end{align}
and
\begin{align}
G_{2/3,5/4}(\lambda,1/16)\ge0,\qquad
G_{2/3,5/4}(\lambda,13/50)\ge0.
\end{align}

At $y=0$, we have $d=\lambda$ and
\begin{align}
Q(\lambda,\lambda)=\frac{\lambda}{(1+\lambda)^2}.
\end{align}
The square-root term vanishes, giving
\begin{align}
G_{1,3/10}(\lambda,0)
=
\frac{756\lambda^2-523\lambda+96}
{96(1+\lambda)^2}>0.
\end{align}
The numerator has positive leading coefficient
and discriminant
\begin{align}
523^2-4\cdot756\cdot96=-16775<0.
\end{align}

For the other three endpoint values, put
\begin{align}
s=1-y,\qquad
K=(1+s\lambda)(1-s\lambda^2)>0.
\end{align}
These are local algebraic quantities.
Since $d=s\lambda$,
\begin{align}
Q(\lambda,d)=\frac{\lambda(1-s\lambda)}{K},
\qquad
\lambda-d=\lambda y.
\end{align}
Multiplying \eqref{eq:central_G} by $K$ gives
\begin{align}
KG_{a,b}
={}&K-m\lambda(1-s\lambda)(\kappa-as\lambda)
\notag\\
&+b\lambda\sqrt{my}\,
\sqrt{(1-s^2\lambda^2)(1-s\lambda^2)}.
\label{eq:central_KG}
\end{align}
For the square-root term, this follows from
\begin{align}
K\sqrt{\frac{1-s\lambda}{K}}
=
\sqrt{(1-s\lambda)K}
=
\sqrt{(1-s^2\lambda^2)(1-s\lambda^2)}.
\end{align}

We use the following rational lower bounds:
\begin{align}
\begin{array}{c|c|c|c}
y&s&\sqrt{my}\ge v_0&
\sqrt{(1-s^2\lambda^2)(1-s\lambda^2)}
 \ge1-k\lambda^2\\ \hline
1/16&15/16&13/20&k=11/12\\
13/50&37/50&4/3&k=13/20
\end{array}
\end{align}
Here $v_0$ and $k$ are auxiliary constants for
these two endpoint calculations.
The first square-root bounds follow from
\begin{align}
\frac{55}{128}>\frac{169}{400},
\qquad
\frac{143}{80}>\frac{16}{9}.
\end{align}
To verify the other bounds, first note that
$1-k\lambda^2>0$ in both rows. Squaring is
therefore legitimate. The difference of the
squared expressions is
\begin{align}
&(1-s^2\lambda^2)(1-s\lambda^2)
-(1-k\lambda^2)^2\notag\\
&=
\lambda^2\bigl[
(2k-s-s^2)+(s^3-k^2)\lambda^2
\bigr].
\end{align}
In each row, the constant in brackets is
positive and the coefficient of $\lambda^2$
is negative. Its minimum on the present range
therefore occurs at $\lambda^2=16/25$.
The resulting values are
\begin{align}
\frac{187}{28800}>0
\quad\text{and}\quad
\frac{2099}{1562500}>0,
\end{align}
respectively.

It follows from \eqref{eq:central_KG} that
$KG_{a,b}$ is bounded below by the cubic
\begin{align}
P(\lambda)
=
K-m\lambda(1-s\lambda)(\kappa-as\lambda)
+bv_0\lambda(1-k\lambda^2).
\end{align}
For the three endpoint choices, these cubics are
\begin{align}
P_1(\lambda)
&=
1-\frac{970048}{153600}\lambda
+\frac{1918500}{153600}\lambda^2
-\frac{1090581}{153600}\lambda^3,\\
P_2(\lambda)
&=
1-\frac{17504}{3072}\lambda
+\frac{31770}{3072}\lambda^2
-\frac{17363}{3072}\lambda^3,\\
P_3(\lambda)
&=
1-\frac{604950}{120000}\lambda
+\frac{979575}{120000}\lambda^2
-\frac{496892}{120000}\lambda^3.
\end{align}
Here the corresponding triples $(a,b,y)$ are
\begin{align}
(1,3/10,1/16),\qquad
(2/3,5/4,1/16),\qquad
(2/3,5/4,13/50),
\end{align}
respectively. Thus it remains to verify
positivity of these three cubics on
$[1/5,4/5]$.

\subsubsection*{Elementary verification of cubic positivity}

For each cubic, we expand about a point near
its minimum and absorb the cubic term into
the quadratic term. The resulting quadratic
lower bounds are positive by completion of
the square.

For $P_1$, put $w=\lambda-3/8$. Exact expansion
gives
\begin{align}
P_1(\lambda)
={}&
\frac{360099}{26214400}
+\frac{559241}{9830400}w
+\frac{1844257}{409600}w^2
-\frac{363527}{51200}w^3.
\end{align}
Since $w\le4/5-3/8=17/40$,
\begin{align}
\frac{1844257}{409600}w^2
-\frac{363527}{51200}w^3
&=
w^2\left(
\frac{1844257}{409600}
-\frac{363527}{51200}w
\right)\notag\\
&\ge\frac{1520663}{1024000}w^2
\ge w^2.
\end{align}
The constant coefficient is greater than
$1/75$, and the absolute linear coefficient
is less than $3/50$. Hence
\begin{align}
P_1(\lambda)
\ge w^2-\frac3{50}|w|+\frac1{75}>0.
\end{align}

For $P_2$, put $w=\lambda-2/5$. Then
\begin{align}
P_2(\lambda)
=
\frac{331}{24000}
-\frac{2639}{19200}w
+\frac{1139}{320}w^2
-\frac{17363}{3072}w^3.
\end{align}
Using $w\le2/5$, we obtain
\begin{align}
\frac{1139}{320}
-\frac{17363}{3072}w
\ge\frac{9973}{7680}>1.
\end{align}
Also $331/24000>1/75$ and
$2639/19200<1/7$. Therefore
\begin{align}
P_2(\lambda)
\ge w^2-\frac17|w|+\frac1{75}>0.
\end{align}

For $P_3$, put $w=\lambda-1/2$. The expansion is
\begin{align}
P_3(\lambda)
=
\frac{1229}{480000}
+\frac{163}{10000}w
+\frac{78079}{40000}w^2
-\frac{124223}{30000}w^3.
\end{align}
Since $w\le3/10$,
\begin{align}
\frac{78079}{40000}
-\frac{124223}{30000}w
\ge\frac{141949}{200000}>\frac7{10}.
\end{align}
The constant coefficient exceeds $1/400$,
and $163/10000<1/60$, giving
\begin{align}
P_3(\lambda)
\ge\frac7{10}w^2-\frac1{60}|w|+\frac1{400}>0.
\end{align}

For completeness, all three final inequalities
follow from
\begin{align}
Aw^2-B|w|+C
=
A\left(|w|-\frac{B}{2A}\right)^2
+C-\frac{B^2}{4A},
\end{align}
with $A>0$ and $C-B^2/(4A)>0$.
The corresponding positive remainders are
\begin{align}
\frac1{75}-\frac9{10000},\qquad
\frac1{75}-\frac1{196},\qquad
\frac1{400}-\frac1{10080}.
\end{align}

Thus all four required endpoint values of $G$
are positive. Concavity then proves
\eqref{eq:central_G_first} and
\eqref{eq:central_G_second} throughout their
respective intervals. For every admissible
$y$, at least one bound in \eqref{planes}
therefore gives $mQD\le1$, contradicting
\eqref{contradiction}. The proposition follows.
\end{proof}

We now apply the proposition to the normalized
slope $Z=U/\sqrt{mH}$. The spectral cutoff
\eqref{c0} already settles $d\le741\lambda/1000$.
At every remaining point,
\begin{align}
d>\frac{741\lambda}{1000}
>\frac{37\lambda}{50},
\end{align}
and \eqref{Zconditions} supplies all the other
hypotheses of Proposition~\ref{central}.
Consequently,
\begin{align}
\frac{\mathcal I}{M}
\le\frac{mQZ^2}{\delta}\le1.
\end{align}
This proves the scalar inequality in the central
range.  

Sections~\ref{sec:small}, \ref{sec:long},
\ref{sec:central}, and \ref{sec:short} cover
all positive slopes, as explained in
Section~\ref{sec:coverage}.
Section~\ref{sec:decreasing} covers negative
slopes, and the zero-slope case is immediate.
This proves Lemma~\ref{lem:envelope}. The reduction
in Section~\ref{sec:main} then establishes
Theorem~\ref{main}.

\section{Conclusion}\label{sec:conclusion}


We have determined the maximum Fisher information
retained by two arbitrary Boolean encoders of correlated
uniform binary vectors, thereby proving the
Amari--Kobayashi conjecture. Matching parity functions
attain the optimum, with their degree chosen to maximize
the parity envelope. The dependence of the optimal
degree on the correlation reflects a balance between
the magnitude of the output correlation and its
sensitivity to changes in the input correlation.

The proof reduces the Fisher-information comparison to a
scalar inequality for the normalized covariance polynomial.
A factorization of the four-cell curvature, together with
the Fourier coefficient bound and the global range
constraint, yields the sharp parity benchmark. Although
the proof is entirely analytical, the scalar argument
still requires a substantial division into parameter
regimes. Finding a shorter proof of this inequality, or
a more conceptual proof of the main theorem, remains an
important direction. Such a proof could clarify how the
coefficient and range constraints enforce parity optimality
and which aspects of the argument extend to other
information measures.

The perspective developed in \cite{chen} suggests
that the Amari--Kobayashi and Courtade--Kumar conjectures
may be extremal instances of a broader family of
relative-entropy inequalities for Boolean functions.
In both problems, a Fourier character attains the
optimum, suggesting that admissible combinations of
noise eigenmodes cannot outperform an appropriate
single mode. Convexity provides a direct explanation
when the Fourier representation yields a mixture of
character-induced output laws with weights independent
of the correlation parameter; more generally, the
challenge is to understand how the constraints imposed
by Boolean functions enforce this extremal behavior.
Fisher information describes the second-order behavior
of relative entropy between nearby parameter values,
so the present result provides a local constraint
that any such family must respect. A local inequality
does not, by itself, establish the corresponding
comparison at finite parameter separations.
Identifying the appropriate family, and determining
how convexity and spectral constraints connect its
local and finite-separation forms, are therefore
natural next questions.

More broadly, Fourier methods for information inequalities
on the discrete cube can be understood as an optimization
relaxation: the constraints imposed by Boolean functions
are replaced by tractable conditions on their Fourier
coefficients or associated polynomials. The effectiveness
of this approach depends on retaining the constraints
that matter for the information functional under
consideration. Here, after bounding the four-cell curvature,
the absolute coefficient budget and the global range
constraint suffice to recover the sharp optimum.
Thus the relaxation preserves the optimum even though
it admits polynomials that need not arise from Boolean
functions. Understanding which spectral and range
constraints must be retained, and which can be discarded
without changing the optimum, may provide a systematic
route to further sharp information inequalities for
Boolean functions.

\appendices
\section{A Differential Estimate Along the Logarithmic Path}
\label{sec:forward}

This appendix establishes an auxiliary estimate used in
Sections~\ref{sec:long} and~\ref{sec:central}. It applies
under the polynomial assumptions of Lemma~\ref{lem:envelope}
at every evaluation point $r\in(0,1)$ with
$u=h'(r)>0$. No additional restriction on $r$ or
$n(-\log r)$ is imposed. The argument uses only the
polynomial assumptions and does not invoke the
conclusion of Lemma~\ref{lem:envelope}.

Retain the decomposition and notation from
Section~\ref{sec:scalar}:
\begin{align}
h=B-A,\qquad
e=A(1),\qquad o=B(1),\qquad e+o=1,
\end{align}
where $A$ and $B$ have nonnegative coefficients.
Also retain
\begin{align}
d=h(r),\qquad
\delta=\lambda-d,\qquad
F(v)=\lambda-h(v)\ge0\quad(0\le v\le1).
\end{align}
The mass bounds \eqref{massbounds} imply $e>0$
and $o>0$. Recall the logarithmic normalization
\begin{align}
t=-\log r,\qquad
U=rtu,\qquad
H=Mr^2t^2
=\max_{1\le k\le n}
\frac{(kt)^2}{\exp(2kt)-1},
\end{align}
where $M=M_n(r)$.

The path $v=r^{1-\theta}$ moves from $r$ toward
$1$ as $\theta$ increases from zero to one.
Along this path, each monomial becomes
$\exp(-kt(1-\theta))$. This representation permits
bounds that depend on the coefficient masses
rather than on individual degrees. The argument
uses nonnegativity of $F$ along the entire path
to relate the initial slope $U$ to the initial
gap $\delta$.


\begin{proposition}\label{riccati}
Let $c=13/24$. With the notation above,
\begin{align}\label{Rlemma}
\delta\ge R(U):=
\frac{U^2}
{13e/12+4U/3-U^2/(o-4U/5)}.
\end{align}
Here $0<U\le o/\mathrm e$. The denominator is
positive on $[0,o/\mathrm e]$, and $R$, with
$R(0)=0$, is strictly increasing on this interval.
\end{proposition}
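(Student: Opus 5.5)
The plan is to work along the logarithmic path $v=r^{1-\theta}$, $0\le\theta<1$, with $G(\theta):=F(r^{1-\theta})=\lambda+a(\theta)-b(\theta)$. Here $a(\theta)=\sum_k\alpha_k\exp(-x_k(1-\theta))$, $b(\theta)=\sum_k\beta_k\exp(-x_k(1-\theta))$ and $x_k=kt$. Then $G\ge0$, $G(0)=\delta$ and $G'(0)=-U$. The aim is to show that nonnegativity of $G$ at a suitably chosen $\theta$ forces $\delta\ge R(U)$.

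\textbf{Step 1: the cap on $U$.} Since $A'(r)\ge0$,
\begin{align}
U=rt\,h'(r)\le rt\,B'(r)=\sum_k\beta_kx_k\exp(-x_k)\le o\max_{x>0}x\exp(-x)=o/\mathrm e .
\end{align}

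\textbf{Step 2: bounds on the two halves of $G$.} For the $A$-part, use the monomial estimate
\begin{align}
a''(\theta)=\sum_k\alpha_kx_k^2\exp(-x_k(1-\theta))\le e\max_{x>0}x^2\exp(-x(1-\theta))=\frac{4e}{\mathrm e^2(1-\theta)^2}.
\end{align}
Alternatively, on a short range of $\theta$, use a bound of the form $a''\le c_1e$ with $c_1$ chosen so that the $A$ contribution integrates to $\tfrac{13e}{12}\cdot(\text{step})^2$. This is where $c=13/24$ enters: $\tfrac{13}{12}=2c$.

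For the $B$-part, do not discard $-b''$; exploit it. Cauchy--Schwarz gives
\begin{align}
b'(\theta)^2\le b(\theta)\,b''(\theta)\le o\,b''(\theta).
\end{align}
This is a Riccati inequality $b''\ge (b')^2/o$ with $b'(0)=U_B\ge U$, where $U_B:=rt\,B'(r)$. Comparison with the Riccati solution gives
\begin{align}
b'(\theta)\ge\frac{U_B}{1-U_B\theta/o}.
\end{align}
Integrating once more gives a lower bound on $b(\theta)-b(0)$ that exceeds the linear term $U\theta$ by a quantity of order $U^2\theta^2/(o-\cdot\,U\theta)$. Together with $a(\theta)-a(0)\le a'(0)\theta+\tfrac{13e}{24}\theta^2$ (using $a'(0)=U_B-U$), this yields
\begin{align}
0\le G(\theta)\le\delta-U\theta-\frac{13e}{24}\,\bigl(\text{negative correction}\bigr)\ldots
\end{align}
that is, an upper bound of the form
\begin{align}
\delta-U\theta+\Bigl(\frac{13e}{24}+\frac{2U}{3}-\frac{U^2/2}{o-4U/5}\Bigr)\theta^2
\end{align}
on a range $\theta\le\theta_0$ where the curvature estimates remain valid. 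The $4U/3$ and $4U/5$ terms should come from handling $U_B-U$ and from linearizing the Riccati/exponential factors on that range.

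\textbf{Step 3: optimize and check monotonicity.} Choosing $\theta=U/D$ with
\begin{align}
D=\frac{13e}{12}+\frac{4U}{3}-\frac{U^2}{o-4U/5}
\end{align}
gives $\delta\ge U^2/D=R(U)$. Then check that $D>0$ on $[0,o/\mathrm e]$, using $e\ge0$, $U\le o/\mathrm e$, and the value of $\tfrac{4U}{3}-\tfrac{U^2}{o-4U/5}$ at $U=o/\mathrm e$, which is positive. Also check that the chosen $\theta$ lies in the admissible range, and that $U^2/D$ is increasing, via the sign of $2D-UD'$.

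The main obstacle is Step 2: keeping the negative $B$-curvature term with exactly the constants $4/3$ and $4/5$ while the step $\theta$ is not small. I would first try the Riccati comparison above with an explicit admissible range $\theta\le 2/5$ or so, and tune it until the constants match.
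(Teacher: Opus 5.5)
There is a genuine gap, at exactly the step you flag, and the route you sketch for it cannot be tuned into \eqref{Rlemma}. Your displayed bound on $G$ is not derived from the ingredients you list. In your decomposition, $U_B-U=a'(0)$ cancels exactly against the linear term of $b$, and a larger $U_B$ only helps. The Riccati factor enters with a favorable sign, because $-o\log(1-U_B\theta/o)\ge U_B\theta$. So the only positive higher-order contribution is the curvature of $a$.

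The constants also do not fit together. A bound $a''\le ce$ gives $\frac{ce}{2}\theta^2=\frac{13e}{48}\theta^2$, not $\frac{13e}{24}\theta^2$. Moreover, your display reads $G(\theta)\le\delta-U\theta+\frac D2\theta^2$, and for this the choice $\theta=U/D$ yields only $\delta\ge U^2/(2D)=R(U)/2$.

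A fixed short range such as $\theta\le 2/5$ is structurally inadequate. Since $R(U)\ge U^2/(13e/12+4U/3)$, the target is at least about $3U/4$ when $e$ is small. Your $\theta=U/D$ is then at least about $3/4$. For small $e$ and $U$ near $o/\mathrm e$ it even exceeds $1$ (its limit at $U=o/\mathrm e$ is about $1.23$), which leaves the path $v=r^{1-\theta}$ altogether.

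The term $4U/3$ is really the trace of the $(1-\theta)^{-2}$ growth in $(1-\theta)^2a''\le ce$ over the whole path. It does not come from $U_B-U$ or from a linearization. If you use that bound with $b''\ge0$ on all of $[0,1)$, optimize $\theta$, and apply \eqref{logineq} as in Section~\ref{sec:small}, you get exactly $\delta\ge U^2/(13e/12+4U/3)$. That is the coarse estimate \eqref{coarse}, not the proposition.

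The subtracted term $-U^2/(o-4U/5)$ is the real content, and your Riccati set-up is too weak to produce it. With $b\le o$ and $a,b$ treated separately, the best you can extract over the full path is $\delta\ge\sup_{0\le\theta<1}\{-o\log(1-U\theta/o)-ce(-\log(1-\theta)-\theta)\}$; the worst case here is $U_B=U$. This falls below $R(U)$: for $U=0.36\,o$ and $e=o/10$ it is about $0.316\,o$, whereas $R(U)\approx0.319\,o$.

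The paper's proof takes a different route:
\begin{itemize}
\item It reparametrizes by $\xi=-\log(1-\theta)$ and follows the normalized net slope $W=(1-\theta)(b'-a')/o$. The weight $1-\theta$ turns the singular bound on $a''$ into the constant $q=ce/o$.
\item It uses the sharper budget $b\le o-(1-\theta)b'\le o(1-W)$, which comes from $(1+x)\mathrm e^{-x}\le1$. Together with $b''\ge b'^2/b$, this gives the autonomous inequality $dW/d\xi\ge-(q+W-W^2/(1-W))$.
\item Comparing with the corresponding ODE up to its hitting time, and using $d\mathcal F/d\xi=-oW$ and $\mathcal F\ge0$, yields $\delta\ge o\int_0^{U/o}w\,dw/(q+w-w^2/(1-w))$.
\item Cauchy--Schwarz, followed by Jensen for the density $4w^3/s^4$ on $[0,s]$ (whose mean is $4s/5$), turns this into $R(U)$. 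This is where $4/3$ and $4/5$ come from.
\end{itemize}
Your Step~1 cap and your Step~3 checks (positivity of $D$ and the sign of $2D-UD'$) are fine and agree with the paper.
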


\begin{proof}
We first derive bounds along the path, then
compare the normalized slope with the solution of a scalar differential equation. Integrating
this comparison gives an integral lower bound
on $\delta$, from which \eqref{Rlemma} follows.

\paragraph{Step 1: estimates along the logarithmic path}

For $0\le\theta<1$, define
\begin{align}
a(\theta)=A(r^{1-\theta}),\qquad
b(\theta)=B(r^{1-\theta}),
\end{align}
and
\begin{align}
\mathcal F(\theta)
=
\lambda+a(\theta)-b(\theta)
=
F(r^{1-\theta})\ge0.
\end{align}
The functions $a$ and $b$ here denote the
compositions along the path; they are distinct
from the marginal endpoint constants used in
the Boolean normalization.

At $\theta=0$, we have $\mathcal F(0)=\delta$.
Moreover,
\begin{align}
\left.\frac{d}{d\theta}r^{1-\theta}
\right|_{\theta=0}
=rt,
\end{align}
so the chain rule gives
\begin{align}
-\mathcal F'(0)=rt\,h'(r)=U.
\end{align}

For each degree $k$, put $x_k=kt(1-\theta)$.
Since $r^{k(1-\theta)}=\exp(-x_k)$,
\begin{align}
&\frac{d}{d\theta}r^{k(1-\theta)}
=kt\exp(-x_k),\\
&\frac{d^2}{d\theta^2}r^{k(1-\theta)}
=(kt)^2\exp(-x_k).
\end{align}
Consequently,
\begin{align}
&(1-\theta)^2a''(\theta)
=\sum_k\alpha_kx_k^2\exp(-x_k),\\
&b(\theta)+(1-\theta)b'(\theta)
=\sum_k\beta_k(1+x_k)\exp(-x_k),\\
&(1-\theta)b'(\theta)
=\sum_k\beta_kx_k\exp(-x_k).
\end{align}
For $x\ge0$, elementary differentiation gives
\begin{align}
x^2\exp(-x)\le\frac4{\mathrm e^2}<c,\qquad
(1+x)\exp(-x)\le1,\qquad
x\exp(-x)\le\frac1{\mathrm e}.
\end{align}
Using $\sum_k\alpha_k=e$ and $\sum_k\beta_k=o$
therefore yields
\begin{align}
&(1-\theta)^2a''(\theta)\le ce,
\label{eq:path_a_bound}\\
&b(\theta)+(1-\theta)b'(\theta)\le o,
\label{eq:path_b_budget}\\
&(1-\theta)b'(\theta)\le o/\mathrm e.
\label{eq:path_b_slope}
\end{align}

We also need a lower bound on $b''$. By
Cauchy--Schwarz,
\begin{align}
b'(\theta)^2
&=
\left(\sum_k\beta_kkt\exp(-x_k)\right)^2\notag\\
&\le
\left(\sum_k\beta_k\exp(-x_k)\right)
\left(\sum_k\beta_k(kt)^2\exp(-x_k)\right)\notag\\
&=b(\theta)b''(\theta).
\end{align}
Since $o>0$ and $r^{1-\theta}>0$, we have
$b(\theta)>0$, and hence
\begin{align}\label{eq:path_b_curvature}
b''(\theta)\ge\frac{b'(\theta)^2}{b(\theta)}.
\end{align}

In particular, $a'(0)\ge0$ and
$U=b'(0)-a'(0)>0$ imply
\begin{align}
0<U\le b'(0)\le o/\mathrm e.
\end{align}

\paragraph{Step 2: a differential inequality for the slope}

Introduce the new path parameter
\begin{align}
\xi=-\log(1-\theta),\qquad
\theta=1-\exp(-\xi),\qquad \xi\ge0,
\end{align}
and define
\begin{align}
W(\xi)
=
-\frac{(1-\theta)\mathcal F'(\theta)}{o}
=
\frac{(1-\theta)(b'(\theta)-a'(\theta))}{o},
\qquad
q=\frac{ce}{o}>0.
\end{align}
The symbol $q$ in this proof is a local scalar
constant, not the output marginal probability
used in the Boolean reduction.
Since $d\theta/d\xi=1-\theta$,
\begin{align}
W(0)=\frac Uo,\qquad
\frac{d}{d\xi}\mathcal F(\theta(\xi))=-oW(\xi).
\label{eq:path_gap_derivative}
\end{align}

We derive a differential inequality on any
interval where $W>0$. Because $a'\ge0$,
\begin{align}
oW\le(1-\theta)b'\le o/\mathrm e.
\end{align}
Thus $0<W\le1/\mathrm e$. Also,
\eqref{eq:path_b_budget} gives
\begin{align}
b\le o-(1-\theta)b'\le o(1-W).
\end{align}
Using \eqref{eq:path_b_curvature}, we obtain
\begin{align}
\frac{(1-\theta)^2}{o}b''
&\ge
\frac{((1-\theta)b')^2}{ob}\notag\\
&\ge
\frac{o^2W^2}{o^2(1-W)}
=\frac{W^2}{1-W}.
\end{align}
All denominators are positive because
$W\le1/\mathrm e<1$.

Differentiating the definition of $W$ gives
\begin{align}
\frac{dW}{d\xi}
=
-W+\frac{(1-\theta)^2}{o}(b''-a'').
\end{align}
Combining the preceding estimates with
\eqref{eq:path_a_bound} yields
\begin{align}\label{eq:W_differential}
\frac{dW}{d\xi}
\ge
-W+\frac{W^2}{1-W}-q
=-D_0(W),
\end{align}
where
\begin{align}
D_0(w)=q+w-\frac{w^2}{1-w}.
\end{align}
For $0\le w\le1/\mathrm e<1/2$,
\begin{align}
D_0(w)
=
q+\frac{w(1-2w)}{1-w}
\ge q>0.
\label{eq:D0_positive}
\end{align}

\paragraph{Step 3: comparison and integration}

Put $s=U/o$, and consider the solution of
\begin{align}
\frac{dw}{d\xi}=-D_0(w),\qquad w(0)=s.
\end{align}
Here $s$ is a local scalar denoting the initial
normalized slope, not the covariance polynomial.
Because $D_0$ is continuously differentiable
on $[0,1/\mathrm e]$ and strictly positive
there, this solution decreases until it reaches
zero. Its hitting time is
\begin{align}
T=\int_0^s\frac{dw}{D_0(w)}
\le\frac{s}{q}<\infty.
\end{align}

The scalar comparison theorem applied to
\eqref{eq:W_differential} gives $W(\xi)\ge w(\xi)$
as long as both functions remain positive.
This suffices to show that $W$ stays positive
throughout $[0,T)$. Indeed, if $W$ first reached
zero at a time $\tau<T$, comparison up to $\tau$
and continuity would imply
\begin{align}
0=W(\tau)\ge w(\tau)>0,
\end{align}
a contradiction. Hence
\begin{align}
W(\xi)\ge w(\xi)>0\qquad(0\le\xi<T).
\end{align}
No assertion about the sign of $W$ after $T$
is needed.

Integrating \eqref{eq:path_gap_derivative} up
to $T$ gives
\begin{align}
\mathcal F(\theta(T))
=
\delta-o\int_0^T W(\xi)\,d\xi.
\end{align}
The left side is nonnegative by the global
constraint. Therefore
\begin{align}
\delta
&\ge o\int_0^T W(\xi)\,d\xi\notag\\
&\ge o\int_0^T w(\xi)\,d\xi\notag\\
&=o\int_0^s\frac{w}{D_0(w)}\,dw.
\end{align}
In the last step, we used
$d\xi=-dw/D_0(w)$ along the comparison solution.
Thus
\begin{align}\label{integral}
\delta\ge
o\int_0^{U/o}
\frac{w}{q+w-w^2/(1-w)}\,dw.
\end{align}

\paragraph{Step 4: an explicit lower bound on the integral}

Apply Cauchy--Schwarz to
$\sqrt{w/D_0(w)}$ and $\sqrt{wD_0(w)}$ on $[0,s]$.
Since $D_0(w)>0$,
\begin{align}
\left(\int_0^s w\,dw\right)^2
\le
\left(\int_0^s\frac{w}{D_0(w)}\,dw\right)
\left(\int_0^s wD_0(w)\,dw\right).
\end{align}
Consequently,
\begin{align}\label{eq:path_integral_cs}
\int_0^s\frac{w}{D_0(w)}\,dw
\ge
\frac{s^4/4}
{qs^2/2+s^3/3-\displaystyle\int_0^s
 \frac{w^3}{1-w}\,dw}.
\end{align}
The denominator equals
$\int_0^s wD_0(w)\,dw$ and is positive.

To bound the remaining integral, use the probability
density $4w^3/s^4$ on $[0,s]$. Its mean is
\begin{align}
\int_0^s w\,\frac{4w^3}{s^4}\,dw=\frac{4s}{5}.
\end{align}
The function $w\mapsto1/(1-w)$ is convex on
$[0,s]$, since $s\le1/\mathrm e<1$.
Jensen's inequality therefore gives
\begin{align}
\int_0^s\frac{w^3}{1-w}\,dw
&=
\frac{s^4}{4}
\int_0^s\frac{1}{1-w}\frac{4w^3}{s^4}\,dw\notag\\
&\ge
\frac{s^4}{4(1-4s/5)}.
\label{eq:path_integral_jensen}
\end{align}

The integral in \eqref{eq:path_integral_cs}
is subtracted from the denominator. Replacing
it by the lower bound
\eqref{eq:path_integral_jensen} therefore enlarges
that denominator and gives a weaker, explicit
lower bound. In particular, the enlarged
denominator remains positive. We obtain
\begin{align}
\int_0^s\frac{w}{D_0(w)}\,dw
&\ge
\frac{s^4/4}
{qs^2/2+s^3/3-s^4/[4(1-4s/5)]}\notag\\
&=
\frac{s^2}{2q+4s/3-s^2/(1-4s/5)}.
\end{align}
Substituting $s=U/o$ and $q=ce/o$ into
\eqref{integral}, and multiplying numerator
and denominator by $o$, gives
\begin{align}
\delta
\ge
\frac{U^2}
{2ce+4U/3-U^2/(o-4U/5)}.
\end{align}
Since $2c=13/12$, this is \eqref{Rlemma}.

\paragraph{Step 5: positivity and monotonicity}

Write the denominator of $R(U)$ as
\begin{align}
D(U)=\frac{13e}{12}+\frac{4U}{3}
-\frac{U^2}{o-4U/5}.
\end{align}
For $0\le U\le o/\mathrm e$, put $s=U/o$.
Since $s<2/5$,
\begin{align}
\frac{D(U)}o
=
\frac{13e}{12o}
+s\left(\frac43-\frac{s}{1-4s/5}\right)>0.
\end{align}
Indeed, $1-4s/5>0$, and
\begin{align}
\frac43-\frac{s}{1-4s/5}>0
\quad\Longleftrightarrow\quad
s<\frac{20}{31},
\end{align}
which follows from $s<2/5$.

For $U>0$, differentiation gives
\begin{align}
R'(U)
=
\frac{U\bigl(2D(U)-UD'(U)\bigr)}{D(U)^2}.
\end{align}
A direct calculation yields
\begin{align}
2D(U)-UD'(U)
=
\frac{13e}{6}+\frac{4U}{3}
+\frac{(4/5)U^3}{(o-4U/5)^2}>0.
\end{align}
Thus $R'(U)>0$ for $U>0$. Since $R(0)=0$
and $R(U)>0$ for $U>0$, the function is
strictly increasing on $[0,o/\mathrm e]$.
\end{proof}


\subsection{Additional slope bounds and rescaling}

Two further bounds will be used together with
Proposition~\ref{riccati}:
\begin{align}\label{Ucaps}
U\le o-B(r)\le o-d,\qquad
U^2\le H(o^2-d^2).
\end{align}
For the first, nonnegativity of the coefficients
of $A$ gives
\begin{align}
U=rt(B'(r)-A'(r))
\le rtB'(r)
=\sum_k\beta_kkt\exp(-kt).
\end{align}
Therefore
\begin{align}
B(r)+U
&\le
\sum_k\beta_k(1+kt)\exp(-kt)\notag\\
&\le\sum_k\beta_k=o,
\end{align}
where we used $(1+x)\exp(-x)\le1$.
Since $B(r)\ge d$, this proves the first chain
in \eqref{Ucaps}. For the second bound,
multiply \eqref{oddcs} by $r^2t^2$ and use
$U=rtu$ and $H=Mr^2t^2$.

Now let $m>0$ satisfy $mH\ge1$, and define
\begin{align}
Z=\frac{U}{\sqrt{mH}}.
\end{align}
The estimates above imply
\begin{align}\label{Zconditions}
Z\le U,\qquad
d+Z\le o,\qquad
d^2+mZ^2\le o^2,\qquad
\delta\ge R(Z),\qquad
\frac{\mathcal I}{M}=\frac{mQZ^2}{\delta},
\end{align}
where $Q=Q(\lambda,d)$. We verify each implication
to make the role of the rescaling explicit.

First, $mH\ge1$ gives $Z\le U$, so
$U\le o-d$ implies $d+Z\le o$.
Next,
\begin{align}
mZ^2=\frac{U^2}{H}\le o^2-d^2,
\end{align}
which gives the third constraint.
Also $0<Z\le U\le o/\mathrm e$, so
monotonicity of $R$ and \eqref{Rlemma} yield
\begin{align}
\delta\ge R(U)\ge R(Z).
\end{align}

For the information estimate, the definitions give
the exact identity
\begin{align}
mZ^2=\frac{U^2}{H}=\frac{u^2}{M}.
\end{align}
Thus the definition of $\mathcal I$ in \eqref{eq:scalar_bounds_recalled}, together
with $\Phi_\lambda(d)=Q/\delta$, gives the identity
\begin{align}
\frac{\mathcal I}{M}
=\frac{u^2}{M}\frac{Q}{\delta}
=\frac{mQZ^2}{\delta}.
\end{align}
The condition $mH\ge1$ is needed for
$Z\le U$ and the resulting constraints,
but not for this algebraic identity.

Finally, define
\begin{align}
A_*=\frac{13e}{12}+\frac{4Z}{3}.
\end{align}
The bound $\delta\ge R(Z)$ gives
\begin{align}
\frac{Z^2}{\delta}
\le
A_*-\frac{Z^2}{o-4Z/5}
\le A_*.
\end{align}
Combining this with the last identity in
\eqref{Zconditions} yields the simpler estimate
\begin{align}\label{coarse}
\frac{\mathcal I}{M}\le mQ A_*.
\end{align}
Section \ref{sec:long} uses this simpler estimate, whereas Section \ref{sec:central} retains the subtracted term.


\section*{Acknowledgment}

This work was supported in part by a grant of access to
OpenAI models through the ChatGPT for Academic Researchers
program. The proof was developed with the assistance of
GPT-6 Astra while investigating a more general inequality
proposed in \cite{chen}. An earlier version of the scalar argument relied on
computer-assisted verification using a numerical
certificate. Through extensive interaction
between the authors and the model, this step was replaced
by an analytical argument that, although lengthy, can be
checked by hand. The authors take full responsibility for
the content and correctness of this work.

\end{document}